\documentclass[sigconf]{acmart}
\usepackage{latexsym,amsfonts}

\usepackage{tikz}
\usetikzlibrary{calc,shapes,arrows,automata}
\usepackage{amsmath}
\usepackage{algorithm}
\usepackage{algorithmic}
\usepackage{url}
\usepackage{subcaption} 
\usepackage{caption}
\newcommand{\R}{\mathbb{R}}
\newcommand{\N}{\mathbb{N}}

\newcommand{\Bb}{\mathcal{B}}

\newcommand{\Xx}{\mathcal{X}}

\newcommand{\Sys}{\mathfrak{S}}
\newcommand{\Ll}{\mathcal{L}}

\newcommand{\Inf}{\mathsf{Inf}}
\newcommand{\Tt}{\mathcal{T}}

\newcommand{\Aa}{\mathcal{A}}

\newcommand{\set}[1]{\{ #1 \}}
\setcopyright{rightsretained}

\newtheorem{definition}{Definition}[section]
\newtheorem{theorem}{Theorem}

\copyrightyear{2024}
\acmYear{2024}
\setcopyright{rightsretained}
\acmConference[HSCC '24]{27th ACM International Conference on Hybrid
Systems: Computation and Control}{May 14--16, 2024}{Hong Kong, Hong Kong}
\acmBooktitle{27th ACM International Conference on Hybrid Systems:
Computation and Control (HSCC '24), May 14--16, 2024, Hong Kong, Hong
Kong}\acmDOI{10.1145/3641513.3650120}
\acmISBN{979-8-4007-0522-9/24/05}
\begin{document}

\title{Closure Certificates}
\author{Vishnu Murali}
\affiliation{%
	\institution{University of Colorado Boulder}
	\city{Boulder}
	\country{USA}
}
\email{vishnu.murali@colorado.edu}

\author[]{Ashutosh Trivedi}
\affiliation{%
	\institution{University of Colorado Boulder}
	\city{Boulder}
	\country{USA}
}
\email{ashutosh.trivedi@colorado.edu}

\author[]{Majid Zamani}
\affiliation{%
	\institution{University of Colorado Boulder}
	\city{Boulder}
	\country{USA}
}
\email{majid.zamani@colorado.edu}

\begin{abstract}
    \label{sec:abstract}
    A \emph{barrier certificate}, defined over the states of a dynamical system, is a real-valued function whose zero level set characterizes an inductively verifiable {\it state invariant} separating reachable states from unsafe ones.  
    When combined with powerful decision procedures---such as sum-of-squares programming (SOS) or satisfiability-modulo-theory solvers (SMT)---barrier certificates enable an automated deductive verification approach to safety.
    The barrier certificate approach has been extended to refute LTL and $\omega$-regular specifications by separating consecutive transitions of corresponding $\omega$-automata in the hope of denying all accepting runs.
    Unsurprisingly, such tactics are bound to be conservative as refutation of \emph{recurrence properties} requires reasoning about the well-foundedness of the transitive closure of the transition relation.
    This paper introduces the notion of \emph{closure certificates} as a natural extension of barrier certificates from state invariants to transition invariants. 
    We augment these definitions with SOS and SMT based characterization for automating the search of closure certificates and demonstrate their effectiveness over some case studies.
\end{abstract}
\maketitle

\begin{CCSXML}
\end{CCSXML}

\section{Introduction} 
\label{sec:intro}
As cyber-physical systems and internet-of-things continue to proliferate within critical infrastructure, the need for practical verification algorithms for infinite-state dynamical systems is ever-present. 
Structural induction over the transition structure of dynamical systems provides a lightweight yet powerful proof method to establish safety and invariance guarantees. 
However, when the invariant is not inductive, human ingenuity is required in strengthening the invariant to an inductive one. 
The notion of {\it barrier certificates}~\cite{prajna_2004_safety}, when combined with automatic decision procedures automate the search for an inductive state invariant. 
This paper presents \emph{closure certificates} as a generalization of barrier certificates to capture the \emph{transitive closure} of transition relations to automate verification of linear temporal logic (LTL) and $\omega$-regular specifications of discrete-time dynamical systems.

\vspace{0.5em}\noindent\textbf{Barrier Certificates for State Invariants.} Intuitively, a barrier certificate~\cite{prajna_2004_safety} is a real-valued function over the state space that is negative over the initial states, positive over the unsafe states, and it does not increase with transitions. 
From this definition and the principle of structural induction, it follows that the zero level set of the barrier certificate over-approximate the set of reachable states. 
This, together with the positivity requirement over the unsafe states, provide a separation between reachable and unsafe states, guaranteeing safety. 
The results in~\cite{wongpiromsarn_2015_automata} extended the barrier certificate based approach to refute violations of linear temporal logic (LTL) specifications expressed via $\omega$-automata. 
In this so-called \emph{state-triplet} approach, barrier certificates provide separation between consecutive transitions (involving three states) of the given $\omega$-automaton in such a way that denies accepting runs.
The approach has been extended for verification and synthesis for more general dynamical systems~\cite{jagtap_2018_temporal,jagtap_2020_formal,anand_2021_compositional,anand_2022_small}.
These state-triplet approach are bound to suffer from conservatism as the verification of a general $\omega$-regular property requires refutation of infinitely many visits to some state and that in turn requires a well-founded argument~\cite{podelski_2004_transition,cook_2009_priciples} over transitive closure of transition relation.

\vspace{0.5em}\noindent\textbf{Closure Certificates for Transition Invariants.} Podelski and Rybalchenko, in an influential paper~\cite{podelski_2004_transition}, introduced  disjunctively well-founded \emph{transition invariants} to verify programs against $\omega$-regular properties. 
They defined the transition invariant as an over-approximation of the transitive closure of the transition relation, restricted to the set of reachable states. 
If the transition invariant restricted to pairs of accepting states is disjunctively well-founded, then they showed that no execution can visit these accepting states infinitely often, refuting the $\omega$-regular specification. 
We introduce closure certificates as a functional analog of \emph{transition invariants} and enable the use of SOS programming and SMT solvers to search for these certificates.

Intuitively, a closure certificate $\Tt: \Xx \times \Xx \to \R$ is a real-valued function over the Cartesian product of the state set and itself (state pairs), such that $\Tt(x, x') \geq 0$ if $x'$ is reachable from $x$. 
The closure certificate characterizes a transition invariant $T \subseteq \Xx \times \Xx$, with the set of initial states $\Xx_0$, in the following fashion: 
\begin{equation}
    T = \set{ (x, x') :  \Tt(x, x') {\geq} 0 \text{ and } \Tt(x_0, x) {\geq} 0 \text{ for some $x_0 \in \Xx_0$}}. \label{eq11}
\end{equation}

It is easy to see (Theorem~\ref{thm:bar_to_close}) that the existence of a barrier certificate implies the existence of a closure certificate establishing the same property. 
On the other hand, to appreciate the utility of closure certificate, we show that, even for safety properties (state inviariants), it is often possible to construct a closure certificate of simpler shape (e.g., lower degree polynomials) than a barrier certificate. 
To demonstrate this, consider the simple finite state discrete example shown in Figure~\ref{fig:sys_eg}. 
Here we depict initial states with green filled circles ($\Xx_0 = \set{1,3, 5}$) while unsafe states are shown with red filled circles ($\Xx_u = \set{2, 4}$).
It is easy to see that starting from the initial states, the system never visits any unsafe states. 
We show that while there is no \emph{polynomial} barrier certificate of degree $2$ that demonstrates the safety of the system, there is a \emph{linear} closure certificate that does so. 
We note that this example can be modified to show the absence of barrier certificate for any fixed degree.

\begin{figure}[t!]
  \begin{center}
  \begin{tikzpicture}[node distance =2cm]
    \node[state, draw, initial text =,fill=blue!10!white] (0) at (0,0) {$0$};
    \node[state, fill=green!10!white,] (1) at (-1,-1.2) {$1$};
    \node[state, fill=red!20!white,] (2) at (0,-1.2) {$2$};
    \node[state, fill = green!10!white] (3) at (1,-1.2) {$3$};
    \node[state, fill = red!20!white] (4) at (2,-1.2) {$4$};
    \node[state, fill = green!10!white] (5) at (3,-1.2) {$5$};
    \node[] (6) at (-1, -2.2) {};
    \node[] (7) at (1, -2.2) {};
    \node[] (8) at (3, -2.2) {};
    \path[->]
    (1) edge node[above]{} (0)
    (2) edge node{} (0)
    (3) edge node{} (0)
    (4) edge node{} (0)
    (5) edge[bend right] node{} (0)
    (0) edge[loop above] node{} (0)
    (6) edge node{} (1)
    (7) edge node{} (3)
    (8) edge node{} (5);
    \end{tikzpicture}
  \end{center}
  \caption{Illustrative example demonstrating the simplicity of closure certificates over barrier certificates}
  \label{fig:sys_eg}
\end{figure}
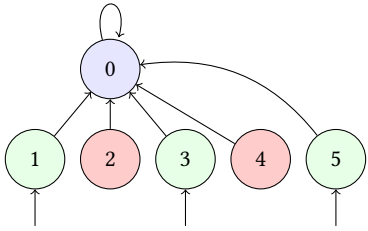

Let us suppose that there exists a polynomial barrier certificate $\Bb$ of degree $2$ that acts as a proof of safety.
We need $\Bb(x) \leq 0$ for every state $x \in \Xx_0$, and $\Bb(x) > 0$ for every state $x \in \Xx_u$.
Applying intermediate value theorem, the function $\Bb$ needs to change signs in at least $3$ points and must therefore have at least $3$ roots. 
This supports our claim that there is no barrier certificate that is a polynomial of degree $2$.
On the other hand, the linear function $\Tt: \Xx \times \Xx \to \R$ defined as $\Tt(x,y) = -y$ is a closure certificate for this system.
While we defer the details to later sections, from~(\ref{eq11}), it follows that this function corresponds to the transition invariant $T = \set{(1,0), (3,0), (5,0), (0,0) }$, and has no intersection with the set $\Xx_0 \times \Xx_u$.
This ensures the safety of the system.
We have chosen a finite-state example for illustrative purposes. This paper deals with continuous state spaces, and our case studies (cf. Section~\ref{sec:case_studies}) will demonstrate similar advantages in continuous state spaces.

While barrier certificates can be employed to verify other, more complex objectives (e.g., liveness or general linear-time properties), their applications in such settings are often conservative~\cite{jagtap_2018_temporal, jagtap_2020_formal, anand_2021_compositional}. We adapt closure certificates to validate or refute general linear-time properties. As an example, consider the so-called ``persistence'' property, where one wishes to verify that a system visits some region (denoted as $\Xx_{VF}$) only finitely often, or alternatively, it eventually stays within some region (the complement of $\Xx_{VF}$). 
We extend closure certificate conditions (Section\ref{subsec:finite_vis}) with a ``potential''-like argument. In particular, we require that for every initial state $x_0$ and every pair of states $y$ and $y'$ in the set $\Xx_{VF}$, if the system can reach from $y$ to $y'$, then the potential between $(x_0, y')$ is less than the potential between $(x_0, y)$ by a certain fixed amount. This, in turn, implies that for every execution starting from an initial state, the region $\Xx_{VF}$ can only be visited finitely often.
This approach can be extended to general linear-time objectives (Section~\ref{subsec:ltl_verif}) by employing the classical automata-theoretic approach that reduces LTL verification to visiting certain states only finitely often.

\vspace{0.5em}
\noindent \textbf{Contributions.}
The contributions of the paper are listed next.
\begin{enumerate}
    \item  This paper proposes a novel notion of closure certificates that act as a functional analog to transition invariants.
    \item We present SOS programming as well as SMT characterizations to search for a closure certificates within a given template (function class).
    \item We show that even when traditional barrier certificates of a some template fail to ensure safety, one can find closure certificates of the same template.
    \item We demonstrate how to use closure certificates to verify dynamical systems against LTL specifications described by $\omega$-automata with our case studies.
    \item We show how closure certificates subsume existing barrier certificate based approaches to verify continuous-space dynamical systems against LTL specifications.
\end{enumerate}

\vspace{0.5em} \noindent \textbf{Related works.}
Prajna and Jadbabaie proposed the notion of barrier certificates~\cite{prajna_2004_safety} as a discretization-free approach to give guarantees of safety or reachability~\cite{prajna_2007_convex} for dynamical and hybrid systems.
The results in~\cite{wongpiromsarn_2015_automata} presented a state triplet approach that uses barrier certificates to verify linear temporal logic properties specified by $\omega$-automata. 
This approach has since been used in the the verification and synthesis of stochastic and interconnected continuous-space  systems against linear temporal logic properties~\cite{jagtap_2018_temporal,jagtap_2020_formal,anand_2021_compositional,anand_2022_small}. 
Unfortunately, the above approach is conservative in the sense that it treats the  nondeterministic B\"uchi automaton corresponding to the negation of the LTL specification as a finite automaton and then searches for barrier certificates to disallow the transitions along accepting paths to show the accepting state is not visited.
Thus, even if a system satisfies the property but visits the accepting state, then one cannot make use of the above approaches to verify the system.
We show (cf. Section~\ref{sec:subsumption}) that our approach subsumes this current approach.
Podelski and Rybalchenko~\cite{podelski_2004_transition}, proposed a notion of transition invariants and demonstrated their use in verifying the liveness properties of programs as well as in verifying programs against $\omega$-regular properties.
Transition invariants have also been used in~\cite{podelski_2006_model}  to give guarantees of stability for hybrid systems. 
Here, they make use of a reachability analysis tool to determine the overapproximation of reachable states and then establish a Lyapunov guarantee on the transition invariant to ensure the stability of the system. The results in~\cite{sankaranarayanan_2011_relational} consider 
a notion of relational abstraction that is similar to transition invariants to give guarantees for safety. 

While this paper focuses on abstraction-free approaches to verify LTL properties specified as $\omega$-automata, we note that abstraction-based techniques have been used in the verification and synthesis of continuous-space dynamical systems against LTL properties such as the results in ~\cite{henzinger_1997_hytech,lahijanian_2011_temporal,rungger_2016_scots,khaled_2021_omegathreads}.  These results rely on building a finite state abstraction and then making use of model checking and synthesis techniques~\cite{tabuada_2009_verification,baier_2008_principles}.

\section{Preliminaries}
\label{sec:prelims}
We use $\N$ and $\R$ to denote the set of natural numbers and reals.
For $a \in \R$, we use $\R_{\geq a}$ and $\R_{> a}$ to denote the intervals $[a, \infty)$ and $(a,\infty)$, respectively, and similarly, for any natural number $n \in \N$, we use $\N_{\geq n}$ to denote the set of natural numbers greater than or equal to $n$.
Given a set $A$, sets $A^{*}$ and $A^{\omega}$ denote the set of finite and countably infinite sequences of elements in $A$, while $|A|$ denotes the cardinality of the set. 
If $A \subseteq B$, and the set $B$ can be inferred from the context, we denote the complement $B \setminus A$ simply as $\overline{A}$. 
We call a function $f:A \to \R$ {\it bounded} if there exists $l,u \in \R$, such that $ l \leq f(a) \leq u$ for every $a \in A$.

Given a relation $R \subseteq A \times B$, and an element $a \in A$, we use $R(a)$ to denote the set $\{ b \mid (a,b) \in R \}$. 
We use the notation $(a_1, a_2, \ldots, a_n) \in A^{*}$ for finite length sequences and $\langle a_0, a_1, \ldots, \rangle\in A^{\omega}$ for $\omega$-sequences.
Let $\Inf(s)$ be the set of infinitely often occurring elements in the sequence $s  =\langle a_0, a_1, \ldots, \rangle$.
Given an infinite sequence $s = \langle a_0, a_1, \ldots, \rangle$, and two natural numbers $i,j \in \N$ where $i \leq j$, we use $s[i,j]$ to indicate the finite sequence $(a_i, a_{i+1}, \ldots, a_j)$, and $s[i, \infty)$ to indicate the infinite sequence $\langle a_i, a_{i+1}, \ldots \rangle$. 
Finally, we use $s[i]$ to denote the $i$th element in the sequence $s$, \textit{i.e.}, given an infinite sequence $s = \langle a_0, a_1, \ldots \rangle$, we have $s[i] = a_i$ for any $i \in \N$. 
\subsection{Discrete-time Dynamical System}
\label{subsec:prelims_system}
A (discrete-time dynamical) system $\Sys$ is a tuple $(\Xx,\Xx_0, f)$, where $\Xx $ (possibly infinite) denotes the state set, $\Xx_0 \subseteq X$ denotes a set of initial states, and $f \subseteq \Xx \times \Xx$ is the state transition relation.
The state evolution of the system is given as the following:
\begin{equation}
\label{eq:state_evolution}
    \Sys: x(t+1) \in f(x(t)).
\end{equation}
If for every $x \in \Xx$, we have $|f(x)| = 1$, then we consider the transition relation $f$ to be a \emph{state transition function} that uniquely determines the next state.
Abusing notation, we use $f$ for both a set-valued map when it is a relation, and a transition function when it is a function.
Throughout the paper, we assume that state sets of the systems under consideration are compact.

A \emph{state sequence} is an infinite sequence  $\langle x_0, x_1, \ldots, \rangle \in \Xx^{\omega}$ where $x_0 \in \Xx_0$, and $x_{i+1} \in f(x_i)$, for all $i \in \N$.
We associate a labelling function $\Ll: \Xx \to \Sigma$ which maps each state of the system to a letter in a finite alphabet $\Sigma$. 
This naturally generalizes to mapping a state sequence of the system $\langle x_0, x_1, \ldots, \rangle \in \Xx^{\omega}$ to a trace or word $w = \langle \Ll(x_0), \Ll(x_1), \ldots, \rangle \in \Sigma^{\omega}$. 
For notational convenience, given a state $x \in \Xx$, we use $x'$ to indicate a state in $f(x)$. 
Let $TR(\Sys, \Ll)$ denote the set of all traces of $\Sys$ under the labeling map $\Ll$. 

\subsection{Specifications}
We are interested in deductive verification of linear-time properties over discrete-time dynamical systems.
We study increasingly complex specifications from safety, and persistence, to LTL and $\omega$-regular specifications.

\vspace{0.5em}\noindent \textbf{Safety.}
We say that a system is safe with respect to a set of unsafe states $\Xx_u \subseteq \Xx$ if for any state sequence $\langle x_0, x_1, \ldots, \rangle$ we have $x_i \notin \Xx_u$ for all $i \in \N$.
An important technique to verify the safety of the system is to synthesize \emph{barrier certificates}~\cite{prajna_2004_safety}.
\begin{definition}[Barrier Certificate]
\label{def:bar}
   A function $\Bb: \Xx \to \R$ is a barrier certificate for $\Sys = (\Xx, \Xx_0, f)$ with respect to a set of unsafe states $\Xx_u$ if: 
    \begin{align}
        & \Bb(x) \leq 0 && \text{ for all } x \in \Xx_0 \label{eq:bar_cond_1} \\
        & \Bb(x) > 0 && \text{ for all } x \in \Xx_u \label{eq:bar_cond_2} \\
        & \big( \Bb(x) \leq 0 \big) {\implies} \big( \Bb(x') \leq 0 \big) && \text{ for all } x \in \Xx, \text{ and } x' \in f(x) \label{eq:bar_cond_3}
    \end{align}
\end{definition}
\begin{theorem}[Barrier certificates imply safety \cite{prajna_2004_safety}]
    For a system $\Sys$ with unsafe states $\Xx_u$, the existence of a barrier certificate $\Bb$ implies its safety.
\end{theorem}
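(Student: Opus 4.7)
The plan is to argue by structural induction that the zero sublevel set $\{x \in \Xx : \Bb(x) \leq 0\}$ is an inductive invariant that contains all reachable states, and then to observe that condition~(\ref{eq:bar_cond_2}) forces this invariant to be disjoint from $\Xx_u$. Fix an arbitrary state sequence $\langle x_0, x_1, \ldots \rangle$ of $\Sys$; it suffices to show that $x_i \notin \Xx_u$ for every $i \in \N$.

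First I would establish the base case: since $x_0 \in \Xx_0$ by definition of a state sequence, condition~(\ref{eq:bar_cond_1}) directly yields $\Bb(x_0) \leq 0$. For the inductive step, I would assume $\Bb(x_i) \leq 0$ for some $i \in \N$. Because $x_{i+1} \in f(x_i)$, condition~(\ref{eq:bar_cond_3}) applied with $x = x_i$ and $x' = x_{i+1}$ gives $\Bb(x_{i+1}) \leq 0$. By induction, $\Bb(x_i) \leq 0$ holds for all $i \in \N$.

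To close the argument, I would take the contrapositive of condition~(\ref{eq:bar_cond_2}): if $\Bb(x) \leq 0$, then $x \notin \Xx_u$. Applied to each $x_i$ in the state sequence, this yields $x_i \notin \Xx_u$ for every $i \in \N$. Since the state sequence was arbitrary, $\Sys$ is safe with respect to $\Xx_u$.

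There is no real obstacle here; the content of the proof lies entirely in recognising that the three defining conditions of Definition~\ref{def:bar} are precisely the premises needed to apply the principle of structural induction over the transition relation. The only subtle point worth flagging is that condition~(\ref{eq:bar_cond_3}) is required to hold for all $x \in \Xx$ (not just for $x$ with $\Bb(x) \leq 0$), but this is automatic from the statement and does not affect the induction.
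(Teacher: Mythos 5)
Your proof is correct and follows exactly the standard induction argument that the paper alludes to in its introduction (the zero sublevel set of $\Bb$ is an inductive invariant containing all reachable states, and condition~(\ref{eq:bar_cond_2}) separates it from $\Xx_u$); the paper itself gives no explicit proof, deferring to the cited reference, so there is nothing to diverge from.
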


\vspace{0.5em}\noindent \textbf{Persistence  (refuting recurrence).}
We say that a system visits a region $\Xx_{VF} \subseteq \Xx$ only finitely often if for any state sequence $\langle x_0, x_1, \ldots, \rangle$ there exists some $i \in \N$, such that for all $j \geq i$, $j \in \N$, we have $x_j \notin \Xx_{VF}$.
Observe that if a system is safe with respect to a set of unsafe states $\Xx_u$, then it satisfies the persistence objective.
Thus one can make use of barrier certificates as a sound (not complete) way to ensure persistence.
Another approach to ensure persistence is to fix the value of $i$ to some constant value, and then search for a barrier certificate over the system and an augmented value.
Such approaches are common in bounded verification and synthesis approaches as in~\cite{schewe_2007_bounded, filiot_2009_antichain} for finite state systems.

\vspace{0.5em}\noindent \textbf{Linear Temporal Logic (LTL).}
Formulae in LTL~\cite{pnueli_1977_temporal}  are defined with respect to a set of finite atomic propositions $AP$ that are relevant to our system. 
Let $\Sigma = 2^{AP}$ denote the powerset of atomic propositions.
A trace $w = \langle w_0, w_1, \ldots, \rangle \in \Sigma^{\omega}$ is an infinite sequence of sets of atomic propositions.
The syntax of LTL can be given via the following grammar: 
\[\phi := \top \;|\; a \;|\; \neg \phi \;|\; \mathsf{X} \phi  \;|\; \phi \mathsf{U} \phi , \]
where $\top$ indicates true, $a \in AP$ denotes an atomic proposition, symbols $\wedge$, $\neg$ denote the logical AND and NOT operators respectively.
The temporal operators next, and until are denoted by $\mathsf{X}$, and $\mathsf{U}$ respectively.
The above operators are sufficient to derive the logical OR ($\vee$) and implication ($\implies$), and the temporal operators release ($\mathsf{R}$), eventually ($\mathsf{F}$) and always ($\mathsf{G}$) respectively.

We inductively define the semantics of an LTL formula with respect to trace $w$ as follows:
\begin{align}
&w \models a && \text{ if } a \in w[0] \\
& w \models \phi_1 \wedge \phi_2 && \text{ if } w \models \phi_1 \text{ and } w \models \phi_2 \\
& w \models \neg \phi && \text{ if } w \not\models \phi \\
& w \models \mathsf{X} \phi & & \text{ if } w[1, \infty) \models \phi\\
& w \models \phi_1 \mathsf{U} \phi_2 && \text{ if there exists } i \in \N \text{ such that } w[0,i] \models \phi_1 \nonumber \\ & && \text{ and } w[i+1, \infty) \models \phi_2
\end{align}
To reason about whether a system satisfies a property specified in LTL, we associate a labelling function $\Ll: \Xx \to \Sigma$ which maps each state of the system to a letter in the finite alphabet $\Sigma$. 
This naturally generalizes to mapping a state sequence of the system $\langle x_0, x_1, \ldots, \rangle \in \Xx^{\omega}$ to a trace $w = \langle \Ll(x_0), \Ll(x_1), \ldots, \rangle \in \Sigma^{\omega}$. 
Let $TR(\Sys, \Ll)$ denote the set of all traces of $\Sys$ under the labeling map $\Ll$. 
Then the system $\Sys$ satisfies an LTL property $\phi$ under labeling map $\Ll$ if for all $w \in TR(\Sys, \Ll)$, we have $w \models \phi$.
We denote this as $\Sys \models_{\Ll} \phi$ and infer the labeling map from context.
As LTL subsume safety and persistence, one can formulate these as LTL formulae.

\vspace{0.5em}\noindent \textbf{Nondeterminstic B\"uchi Automata.}
A nondeterminstic B\"uchi automaton (NBA) $\Aa$ is a tuple $(\Sigma,Q, Q_0, \delta, Acc)$, where:
\begin{itemize}
    \item $\Sigma$ is the alphabet, 
    \item $Q$ a finite set of states,
    \item $Q_0 \subseteq Q$ an initial set of states, 
    \item $\delta \subseteq Q \times \Sigma \times Q$ the transition relation, and 
    \item ${Acc} \subseteq Q$ denotes a set of accepting states.
\end{itemize}
A run of the automaton $\Aa = (\Sigma,Q, q_0, \delta, Acc)$ over a trace $w = \langle \sigma_0, \sigma_1, \sigma_2 \ldots, \rangle \in \Sigma^{\omega}$, is an infinite sequence of states  characterized as $\rho = \langle q_0,q_1, q_2, \ldots, \rangle \in Q^{\omega}$ with $q_0 \in Q_0$ and $q_{i+1} \in \delta(q_i, \sigma_i)$.
An NBA $\Aa$ is said to accept a trace $w$, if there exists a run $\rho$ over $w$ where $\Inf(\rho) \cap {Acc} \neq \emptyset$.

It is well known~\cite{vardi_2005_automata} that given an LTL formula $\phi$ over a set of atomic propositions $AP$, one can construct an NBA $\Aa$ such that $w \in L(\Aa)$ iff $w \models \phi$.
An automata-theoretic technique to determine whether $\Sys \models_{\Ll} \phi$ is to first find the NBA $\Aa$ that represents $\neg \phi$, and then ensure that $\Sys \not\models_{\Ll} \neg \phi$ by showing that no trace of the system is accepted by the NBA $\Aa$.
While converting an LTL formula to an NBA is exponential in the size of the formula, negating an LTL formula has a complexity that is linear in its size.

\section{Closure Certificates}

Podelski and Rybalchenko~\cite{podelski_2004_transition} introduced the notion of transition invariants as an over-approximation of the transitive closure of the transition relation, restricted to the set of reachable states. 
If the transition invariant restricted to pairs of accepting states is disjunctively well-founded, then they showed that no execution can visit these accepting states infinitely often, refuting the $\omega$-regular specification. 
In this section, we introduce closure certificates (CC) as a functional analog of \emph{transition invariants}.

Recall that barrier certificates are functional analogs to inductive state invariants in the following way:
all the initial states are within the zero level set of the barrier certificate, and, given any state that is within the zero level set, its successor according to the transition relation is also in the zero level set.
Our definition of closure certificates are a functional analog to inductive \emph{transition invaraints}. 
We study their use in the verification of safety, persistence (refuting recurrence), and LTL specifications.

\subsection{Closure Certificates for Safety}
\label{subsec:verif_safety}
We first define closure certificates for safety.
\begin{definition}[Closure Certificate for Safety]
\label{def:tbar_safe}
Consider a system  $\Sys = (\Xx, \Xx_0, f)$. A  function
    $\Tt: \Xx \times \Xx \to \R$ is a Closure Certificate (CC) for a set of unsafe states $\Xx_{u}$ if there exists a value $\xi \in \R_{ > 0}$ such that for all states $x, y \in \Xx$, $x' \in f(x)$, and states $x_0 \in \Xx_0$ and $x_u \in \Xx_u$, we have:
    \begin{align}
        & \big( \Tt(x, x') \geq 0 \big) \label{eq:tbar_cond_1_safe} \\
        & \big( \Tt(x', y) \geq 0 \big) \implies \big( \Tt(x, y) \geq 0 \big), \text{ and } \label{eq:tbar_cond_2_safe}\\
        & \big( \Tt(x_0, x_u) \leq - \xi \big). \label{eq:tbar_cond_3_safe} 
    \end{align}
\end{definition}

The existence of a closure certificate implies the safety of the system $\Sys = (\Xx, \Xx_0, f)$ as shown next.

\begin{theorem}[Closure Certificate imply Safety]
    \label{thm:closure_safe}
   Consider a system $\Sys$. The existence of a function $\Tt: \Xx \times \Xx \to \R$ that satisfies conditions~\eqref{eq:tbar_cond_1_safe}-\eqref{eq:tbar_cond_3_safe} implies its safety.
\end{theorem}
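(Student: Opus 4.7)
The plan is to prove, by induction on $n \geq 1$, that along any finite prefix $x_0, x_1, \ldots, x_n$ of a state sequence with $x_{i+1} \in f(x_i)$, the closure certificate records reachability in the sense that $\Tt(x_0, x_n) \geq 0$. Combined with condition~\eqref{eq:tbar_cond_3_safe}, which forces $\Tt(x_0, x_u) \leq -\xi < 0$ whenever $x_0 \in \Xx_0$ and $x_u \in \Xx_u$, this rules out any trajectory from $\Xx_0$ reaching $\Xx_u$ in a positive number of steps, which is exactly the safety property.

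The base case $n = 1$ is immediate from~\eqref{eq:tbar_cond_1_safe}. For the inductive step, it is convenient to state the claim uniformly for arbitrary starting states (not only those in $\Xx_0$): assume that for every finite trajectory $y_0, y_1, \ldots, y_n$ of length $n$ one has $\Tt(y_0, y_n) \geq 0$, and consider a trajectory $x_0, x_1, \ldots, x_{n+1}$. Applying the hypothesis to the suffix $x_1, x_2, \ldots, x_{n+1}$ yields $\Tt(x_1, x_{n+1}) \geq 0$; since $x_1 \in f(x_0)$, instantiating~\eqref{eq:tbar_cond_2_safe} with $x = x_0$, $x' = x_1$, $y = x_{n+1}$ upgrades this to $\Tt(x_0, x_{n+1}) \geq 0$. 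Negating safety then produces $x_0 \in \Xx_0$ and $n \geq 1$ with $x_n \in \Xx_u$ reachable in $n$ steps; the claim gives $\Tt(x_0, x_n) \geq 0$ while~\eqref{eq:tbar_cond_3_safe} gives $\Tt(x_0, x_n) \leq -\xi$, a contradiction.

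The main subtlety, and the point at which the definition must be read carefully, is the direction of condition~\eqref{eq:tbar_cond_2_safe}. One might instinctively try to propagate a value of $\Tt(x_0, \cdot)$ \emph{forward} along the transition relation; but~\eqref{eq:tbar_cond_2_safe} goes the other way. The right move is therefore to peel the trajectory from the \emph{front}, apply the inductive hypothesis to the suffix, and then prepend $x_0$ using~\eqref{eq:tbar_cond_2_safe}. This matches the intended reading of $\Tt(x, y) \geq 0$ as ``$y$ is reachable from $x$'': condition~\eqref{eq:tbar_cond_1_safe} seeds the relation with one-step successors, and~\eqref{eq:tbar_cond_2_safe} closes it under prepending a predecessor, so together they characterize a superset of the transitive closure of $f$ — which is precisely the transition-invariant intuition inherited from~\cite{podelski_2004_transition}. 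The only boundary case that requires a separate sentence is $n=0$, i.e.\ whether $\Xx_0 \cap \Xx_u$ could be nonempty; since~\eqref{eq:tbar_cond_3_safe} is asserted for \emph{all} $x_0 \in \Xx_0$ and $x_u \in \Xx_u$ with a strictly negative bound, taking $x_u = x_0$ together with~\eqref{eq:tbar_cond_1_safe} applied to any successor $x_1 \in f(x_0)$ shows $\Xx_0 \cap \Xx_u = \emptyset$ automatically, so no special handling is needed.
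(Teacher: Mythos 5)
Your main argument is correct and is essentially the paper's own proof made explicit: the paper likewise seeds the relation with \eqref{eq:tbar_cond_1_safe}, uses \eqref{eq:tbar_cond_2_safe} inductively to prepend predecessors and conclude $\Tt(x_0,x_i)\geq 0$ along any trajectory, and then contradicts \eqref{eq:tbar_cond_3_safe}. Your observation that the induction must peel the trajectory from the front (because \eqref{eq:tbar_cond_2_safe} only permits prepending a predecessor, not appending a successor) is exactly the right reading, and your writeup of that induction is more careful than the paper's one-line appeal to ``induction''.

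The one concrete error is the closing claim of your last paragraph. Conditions \eqref{eq:tbar_cond_1_safe}--\eqref{eq:tbar_cond_3_safe} do \emph{not} force $\Xx_0\cap\Xx_u=\emptyset$: taking $x_u=x_0$ in \eqref{eq:tbar_cond_3_safe} gives $\Tt(x_0,x_0)\leq-\xi$, while \eqref{eq:tbar_cond_1_safe} only constrains $\Tt(x_0,x_1)$ for $x_1\in f(x_0)$; these concern different pairs and yield no contradiction unless $x_0\in f(x_0)$. Concretely, take $\Xx=\set{0,1}$, $\Xx_0=\Xx_u=\set{0}$, $f(0)=f(1)=\set{1}$, and set $\Tt(0,1)=\Tt(1,1)=0$, $\Tt(0,0)=\Tt(1,0)=-1$ with $\xi=1$: all three conditions hold, yet the system starts in an unsafe state. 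So the $n=0$ case is a genuine boundary issue with the theorem as stated (the paper's definition of safety requires $x_i\notin\Xx_u$ for \emph{all} $i\in\N$, including $i=0$), not something dispatched ``automatically''; the honest fix is to assume $\Xx_0\cap\Xx_u=\emptyset$ or to restrict safety to states reached after at least one transition. To be fair, the paper's proof has the same blind spot: it asserts $\Tt(x_0,x_i)\geq 0$ ``for all $i\in\N$'', which its own induction only delivers for $i\geq 1$.
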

\begin{proof}
Let us assume that there exists a trace of the system $\langle x_0, \ldots, x_u , \ldots \rangle$ that reaches an unsafe state $x_u \in \Xx_u$ from some initial state $x_0$.
From condition~\eqref{eq:tbar_cond_1_safe}, we have 
$\Tt(x_i, x_{i+1}) \geq 0$ for all $i \in \N$, and from condition~\eqref{eq:tbar_cond_2_safe} and induction, we have $\Tt(x_0, x_i) \geq 0$ for all $i \in \N$.
Thus we must have $\Tt(x_0, x_u) \geq 0$ as $x_j = x_u$ for some $j \in \N$.
According to condition~\eqref{eq:tbar_cond_3_safe}, 
$\Tt(x_0, x_u) \leq - \xi$, where $\xi \in \R_{>0}$, which is in  contradiction to the previous inequality. 
\end{proof}

Observe that closure certificates are defined over pairs of states of the system rather than just over the states of the system. Hence, searching for a closure certificate suffers computationally more than a search for a barrier certificate.
On  the other hand, for a certain template of functions (e.g., linear or quadratic), one might be able to find closure certificates, even when barrier certificates of the same template do not exist.
In particular, we have the following result:
\begin{theorem}[Simplicity of Closure Certificates]
\label{thm:Bar_Tinv}
For any natural number $d \in \N$, there exists a system $\Sys $ with unsafe set of states $\Xx_u$ that cannot be shown to be safe by a polynomial barrier certificate of degree $d$ but can be shown to be safe by a linear closure certificate.
\end{theorem}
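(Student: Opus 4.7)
The plan is to lift the construction in Figure~\ref{fig:sys_eg} to arbitrary degree $d$. I would instantiate $\Sys_d = (\Xx_d, \Xx_{0,d}, f_d)$ with finite (hence compact) state set $\Xx_d = \{0, 1, 2, \ldots, d+2\} \subset \R$, initial states $\Xx_{0,d} = \{1, 3, 5, \ldots\} \cap \Xx_d$, unsafe states $\Xx_{u,d} = \{2, 4, 6, \ldots\} \cap \Xx_d$, and transition function $f_d(x) = \{0\}$ for every $x \in \Xx_d$. Safety is immediate: every trajectory lands in the sink $0 \notin \Xx_{u,d}$ after a single step and stays there forever.

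Next I would rule out every polynomial barrier certificate $\Bb \in \R[x]$ with $\deg \Bb \leq d$. Such a $\Bb$ must satisfy $\Bb(i) \leq 0$ at each odd $i$ in $\Xx_d$ and $\Bb(i) > 0$ at each even $i \geq 2$ in $\Xx_d$. Set $\epsilon := \tfrac{1}{2}\min_{i \in \Xx_{u,d}} \Bb(i) > 0$ and consider the perturbation $q(x) := \Bb(x) - \epsilon$, which has the same degree as $\Bb$. At the $d+2$ integers $1, 2, \ldots, d+2$, the values of $q$ strictly alternate in sign: $q(i) \leq -\epsilon < 0$ at odd $i$ and $q(i) \geq \epsilon > 0$ at even $i$. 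Applying the intermediate value theorem to each of the $d+1$ consecutive open intervals $(j, j+1)$, $j = 1, \ldots, d+1$, produces a root of $q$ strictly inside each, hence $d+1$ pairwise distinct real roots, contradicting $\deg q \leq d$.

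Finally, I would exhibit the linear closure certificate $\Tt(x, y) := -y$ with $\xi = 1$. Condition~\eqref{eq:tbar_cond_1_safe} reduces to $\Tt(x, 0) = 0 \geq 0$ because $f_d(x) = \{0\}$. Condition~\eqref{eq:tbar_cond_2_safe} is immediate because $\Tt(x', y) = \Tt(x, y) = -y$ does not depend on the first argument. Condition~\eqref{eq:tbar_cond_3_safe} holds since every $x_u \in \Xx_{u,d}$ satisfies $x_u \geq 2$, giving $\Tt(x_0, x_u) = -x_u \leq -2 \leq -\xi$ for all $x_0 \in \Xx_{0,d}$.

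The main obstacle I anticipate is keeping the root-counting honest: the non-strict inequality $\Bb(x) \leq 0$ on initial states permits $\Bb$ to vanish exactly at an odd integer, which, absent the perturbation, would allow an even-multiplicity zero at that integer to account for two required sign changes at once and collapse the naive count. The trick $q = \Bb - \epsilon$ is designed precisely to sidestep this, making every integer value of $q$ strictly nonzero and strictly alternating, so the roots forced by IVT are confined to the open intervals between integers and are therefore genuinely distinct.
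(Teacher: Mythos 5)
Your proposal is correct and follows essentially the same route as the paper: a system with all transitions collapsing to the safe sink $0$, alternating odd initial and even unsafe integer points forcing too many sign changes for a degree-$d$ polynomial barrier certificate via the intermediate value theorem, and the linear closure certificate $\Tt(x,y)=-y$. Your $\epsilon$-perturbation $q=\Bb-\epsilon$ is a welcome tightening of the root count (the paper applies the IVT directly despite the non-strict $\Bb(x)\leq 0$ on initial states), but it does not change the underlying argument.
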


\begin{proof}
Consider a system $\Sys = (\Xx, \Xx_0, f)$, with $\Xx = [0, (2d+2)]$ as the state set,  $\Xx_0 = \set{1,3, \ldots, (2d + 1) } $ as the initial set of states, and a constant transition relation  $f(x) = \set{ 0 }$  for every state $x \in \Xx$.
Let the set of unsafe states be $\Xx_u = \set{2, 4, \ldots, (2d + 2)}$.
We observe that the system is trivially safe.

Let us suppose there exists a polynomial barrier certificate $\Bb: \Xx \to \R$ of degree $d$ that acts as a proof of safety.
From conditions~\eqref{eq:bar_cond_1} and~\eqref{eq:bar_cond_2}, we have $\Bb(x) \leq 0$ for every state $x \in \Xx_0$, and $\Bb(x) > 0$ for every state $x \in \Xx_u$.
Applying intermediate value theorem, the function $\Bb$ needs to change signs in at least $(d+1)$ points and must therefore have at least $(d+1)$ roots. 
This contradicts our assumption that $\Bb$ is a polynomial of degree $d$.

Consider the function $\Tt: \Xx \times \Xx \to \R$ defined as $\Tt(x,y) = -y$.
Observe that $0 \in f(x) $ for all $x \in \Xx$, and that $\Tt(x,y) \geq 0$ only when $y \leq 0$. 
This implies conditions~\eqref{eq:tbar_cond_1_safe} and~\eqref{eq:tbar_cond_2_safe} are satisfied.
Further for every state $x_u \in X_u$ and every state $x_0 \in X_0$, we have $\Tt(x_0, x_u) \leq -2 $.
Thus condition~\eqref{eq:tbar_cond_3_safe} also holds.
We  conclude that the function $\Tt$ is a closure certificate and acts as a proof that the system is safe.
\end{proof}
We should note that while the proof of the above Theorem relied on showing that no barrier certificate exists for a finite state system, one can employ similar techniques for a continuous space example. Consider the system $\Sys = (\Xx, \Xx_0, f)$, where $\Xx = \R$ denotes the state set, $\Xx_0 = \set{0,\frac{1}{4}, \ldots \frac{1}{2^{d+2}}}$ indicates the initial set of states, and $f(x) = \{ x+1 \}$ denotes the transition relation for every state $x \in \Xx$. Let the set of unsafe states be $\Xx_u = \set{\frac{1}{2}, \ldots, \frac{1}{2^{d+3} } }$, then there exists no polynomial function of degree $d$ that acts as a barrier certificate for the above function.
However the function $\Tt(x,y) = y - x - 1$ acts as a closure certificate that ensures the system starting from the initial state does not reach the unsafe set of states. An illustration of this example for degree $2$ can be found in Appendix~\ref{ap:thm_cont_BarTinv}.

Previously, we discussed how one can use closure certificates even when barrier certificates fail.
We now show that if a system can be guaranteed to be safe via barrier certificates, then it can be guaranteed via closure certificates as well.
\begin{theorem}[Expressiveness]
    \label{thm:bar_to_close}
    Consider a system $\Sys = (\Xx, \Xx_0, f)$, with unsafe set of states $\Xx_u$.
    Given a barrier certificate $\Bb: \Xx \to \R$ (Definition~\ref{def:bar}), one can compute a closure certificate $\Tt: \Xx \times \Xx \to \R$.
\end{theorem}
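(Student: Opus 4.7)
The plan is to lift the given barrier certificate $\Bb$ to a two-variable function by gating it with the safe region $S = \{x \in \Xx : \Bb(x) \leq 0\}$ that $\Bb$ induces. By~\eqref{eq:bar_cond_1} and~\eqref{eq:bar_cond_3}, $S$ contains $\Xx_0$ and is closed under $f$, so any trajectory entering $S$ stays in $S$. Concretely, I would set
\[ \Tt(x, y) = \begin{cases} -\Bb(y) & \text{if } \Bb(x) \leq 0, \\ 0 & \text{otherwise,} \end{cases} \]
and pick $\xi = \min_{x_u \in \Xx_u} \Bb(x_u)$, which is strictly positive by~\eqref{eq:bar_cond_2} together with continuity of $\Bb$ and compactness of $\Xx_u$ (assumed in Section~\ref{subsec:prelims_system}).

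The verification of~\eqref{eq:tbar_cond_1_safe}--\eqref{eq:tbar_cond_3_safe} then proceeds by a case split on whether $x \in S$. For~\eqref{eq:tbar_cond_1_safe}: if $x \in S$, then $x' \in S$ by~\eqref{eq:bar_cond_3}, so $\Tt(x, x') = -\Bb(x') \geq 0$; otherwise $\Tt(x, x') = 0$. For~\eqref{eq:tbar_cond_2_safe}: if $x \notin S$, the consequent is $0 \geq 0$, trivially true; if $x \in S$, then $x' \in S$ as well, and $\Tt(x', y) = \Tt(x, y) = -\Bb(y)$, reducing the implication to an identity. For~\eqref{eq:tbar_cond_3_safe}: $x_0 \in \Xx_0 \subseteq S$ by~\eqref{eq:bar_cond_1}, hence $\Tt(x_0, x_u) = -\Bb(x_u) \leq -\xi$ by the choice of $\xi$.

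The main obstacle to flag is that the naive lift $\Tt(x, y) = -\Bb(y)$ fails~\eqref{eq:tbar_cond_1_safe} whenever one picks $x \notin S$, since the barrier only propagates $\Bb \leq 0$ along transitions that originate inside $S$. Gating by the predicate $[\Bb(x) \leq 0]$ surgically removes exactly those problematic pairs from the zero-superlevel set. Note that the resulting $\Tt$ is not a polynomial even when $\Bb$ is, but Theorem~\ref{thm:bar_to_close} only asserts existence of \emph{some} closure certificate as a real-valued function, so this non-smooth witness suffices; the statement does not claim preservation of the template class.
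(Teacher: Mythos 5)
Your construction is essentially the paper's in spirit: both gate on the predicate $\Bb(x) \leq 0$, and the zero-superlevel set of your $\Tt$ coincides with that of the paper's witness, which sets $\Tt(x,y) = -\gamma$ for a fixed $\gamma \in \R_{>0}$ exactly when $\Bb(x) \leq 0$ and $\Bb(y) > 0$, and $\Tt(x,y) = 0$ otherwise. Your case analysis for conditions~\eqref{eq:tbar_cond_1_safe} and~\eqref{eq:tbar_cond_2_safe} is correct and matches the paper's reasoning.

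The gap is in condition~\eqref{eq:tbar_cond_3_safe}. By assigning the value $-\Bb(y)$ on the gated region instead of a fixed negative constant, you force $\xi$ to be a uniform positive lower bound for $\Bb$ on $\Xx_u$, and you justify $\xi = \min_{x_u \in \Xx_u} \Bb(x_u) > 0$ by ``continuity of $\Bb$ and compactness of $\Xx_u$ (assumed in Section~\ref{subsec:prelims_system}).'' Neither hypothesis is actually available: Definition~\ref{def:bar} only requires $\Bb$ to be a function $\Xx \to \R$, with no continuity assumption, and Section~\ref{subsec:prelims_system} assumes compactness of the state set $\Xx$, not of $\Xx_u$ (a subset of a compact set need not be compact). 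Condition~\eqref{eq:bar_cond_2} gives $\Bb(x_u) > 0$ pointwise, which does not yield a uniform bound --- take $\Xx_u$ with $\inf_{x_u \in \Xx_u}\Bb(x_u) = 0$ not attained, and no admissible $\xi \in \R_{>0}$ exists for your $\Tt$. The repair is precisely the paper's move: on the set $\set{(x,y) : \Bb(x) \leq 0 \text{ and } \Bb(y) > 0}$ assign the constant value $-\gamma$ rather than $-\Bb(y)$ (equivalently, clip your $\Tt$ to two values), after which $\xi = \gamma$ works with no additional hypotheses on $\Bb$ or $\Xx_u$.
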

\begin{proof}
Let $\gamma \in \R_{ > 0}$.
     We define the function $\Tt: \Xx \times \Xx \to \R$  as:
     \[
     \Tt(x,y) = 
     \begin{cases}
     0 & \text{ if } \Bb(x) > 0  \text{ or }  \Bb(y) \leq 0, \\
     -\gamma & \text{otherwise}.
     \end{cases}
     \]
     We now show that $\Tt$ is a CC with $\xi = \gamma$.
     Let us suppose that $\Tt(x, x') < 0$ for some $x \in \Xx$.
     For this to be true, we must have $\Bb(x) \leq 0$, and $\Bb(x') > 0$, however, this contradicts condition~\eqref{eq:bar_cond_3} and so $\Tt$ must satisfy condition~\eqref{eq:tbar_cond_1_safe}.
     Second, suppose $\Tt(x', y) \geq 0$, and $\Tt(x,y) < 0$.
     Then $\Bb(x) \leq 0$, $\Bb(y) > 0$, and one of $\Bb(x') > 0$ or $\Bb(y) \leq 0$.
     Since both $\Bb(y) \leq 0$ and $\Bb(y) > 0$ cannot be true, we must have $\Bb(x) \leq 0$, and $\Bb(x') > 0$, which again contradicts condition~\eqref{eq:bar_cond_3}, and so condition~\eqref{eq:tbar_cond_2_safe} must hold.
     Finally, consider $\Tt(x_0, x_u)$. From conditions~\eqref{eq:bar_cond_1} and~\eqref{eq:bar_cond_2}, we have $\Bb(x_0) \leq 0$, and $\Bb(x_u) > 0$, and, hence, by definition $\Tt(x_0, x_u) = - \xi$ satisfies condition~\eqref{eq:tbar_cond_3_safe}. 
\end{proof}

\subsection{Closure Certificates for Persistence}
\label{subsec:finite_vis}
Similar to how closure certificates are used to guarantee safety, one may use  closure certificates to show a region is visited finitely often. 
This relies on showing that the closure certificate is well-founded, similar to the condition used in~\cite{podelski_2006_model}.

\begin{definition}[Closure Certificates for Persistence]
\label{def:tbar}
   Consider a system $\Sys = (\Xx, \Xx_0, f)$. A bounded    function $\Tt: \Xx \times \Xx \to \R$ is a Closure Certificate (CC) for $\Sys$ with set of states $\Xx_{VF} \subseteq \Xx$, that must be visited finitely often if there exists a value $\xi \in \R_{ > 0}$ such that for all states $x,y \in \Xx$,  $x' \in f(x)$,  $x_0 \in \Xx_0$, and all states $y', y'' \in \Xx_{VF}$ we have:
    \begin{align}
        &\big( \Tt(x, x') \geq 0 \big), \label{eq:tbar_cond_1} \\
        & \big( \Tt(x', y) \geq 0 \big) \implies \big( \Tt(x, y) \geq 0 \big) \text{ and } \label{eq:tbar_cond_2}\\
        &\big( \Tt(x_0, y') \geq 0 \big) \wedge \big( \Tt(y',y'') \geq 0 \big) \implies \nonumber \\
         &\big( \Tt(x_0,y'') \leq \Tt(x_0,y') - \xi \big).   \label{eq:tbar_cond_3} 
    \end{align}
\end{definition}
\begin{theorem}[Closure Certificates imply Persistence]
    \label{thm:tbar_visit_finite}
    Consider a system $\Sys$. The existence of a function $\Tt: \Xx \times \Xx \to \R$ that satisfies conditions~\eqref{eq:tbar_cond_1}-\eqref{eq:tbar_cond_3} implies that the traces of the system visit the set $\Xx_{VF}$ finitely often.
\end{theorem}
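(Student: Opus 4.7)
The plan is to argue by contradiction: assume there is a state sequence $\langle x_0, x_1, \ldots \rangle$ of $\Sys$ that visits $\Xx_{VF}$ infinitely often, and derive that $\Tt(x_0, \cdot)$ must be unbounded below along the visits, contradicting the assumption that $\Tt$ is bounded.

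First, I would reproduce the reachability-style argument from the proof of Theorem~\ref{thm:closure_safe} in a slightly more general form. Using condition~\eqref{eq:tbar_cond_1} we have $\Tt(x_i, x_{i+1}) \geq 0$ for every $i$. Then, for any indices $a \leq b$, I would peel transitions off the right and apply condition~\eqref{eq:tbar_cond_2} backwards: start from $\Tt(x_{b-1}, x_b) \geq 0$, apply \eqref{eq:tbar_cond_2} with the pair $(x_{b-2}, x_{b-1})$ and $y = x_b$ to conclude $\Tt(x_{b-2}, x_b) \geq 0$, and iterate down to index $a$, giving $\Tt(x_a, x_b) \geq 0$. Specializing $a = 0$ yields $\Tt(x_0, x_n) \geq 0$ for all $n$.

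Now let $i_1 < i_2 < \cdots$ enumerate the (infinitely many) indices with $x_{i_k} \in \Xx_{VF}$. By the previous paragraph, $\Tt(x_0, x_{i_k}) \geq 0$ for every $k$, and $\Tt(x_{i_k}, x_{i_{k+1}}) \geq 0$ for every $k$. Both $x_{i_k}$ and $x_{i_{k+1}}$ lie in $\Xx_{VF}$, so condition~\eqref{eq:tbar_cond_3} applies with $y' = x_{i_k}$ and $y'' = x_{i_{k+1}}$, yielding
\begin{equation*}
\Tt(x_0, x_{i_{k+1}}) \;\leq\; \Tt(x_0, x_{i_k}) - \xi .
\end{equation*}
A simple induction on $k$ then gives $\Tt(x_0, x_{i_k}) \leq \Tt(x_0, x_{i_1}) - (k-1)\xi$, which tends to $-\infty$ as $k \to \infty$. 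This contradicts the assumed boundedness of $\Tt$ and completes the proof.

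The only delicate step is the backward propagation in the second paragraph: condition~\eqref{eq:tbar_cond_2} is phrased from successor to predecessor, so one must be careful to fix $y$ at the later index $x_b$ and walk the first coordinate backward through the trace rather than try to push forward. Once this observation is in place, the rest is a straightforward telescoping argument using the strict decrement $\xi$ guaranteed by~\eqref{eq:tbar_cond_3}, combined with the hypothesis that $\Tt$ is bounded (which is why boundedness was explicitly required in Definition~\ref{def:tbar} but not in Definition~\ref{def:tbar_safe}).
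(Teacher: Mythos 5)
Your proof is correct and follows essentially the same route as the paper: contradiction via the backward propagation of condition~\eqref{eq:tbar_cond_2} to establish $\Tt(x_a,x_b)\geq 0$ along the trace, followed by the telescoping decrement from condition~\eqref{eq:tbar_cond_3} on consecutive $\Xx_{VF}$-visits. The only cosmetic difference is that the paper derives the contradiction against the upper bound $\Tt^{*}$ while you phrase it against boundedness below (and in fact the nonnegativity $\Tt(x_0,x_{i_k})\geq 0$ you already established would suffice on its own); both are sound.
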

\begin{proof}
Let us suppose that there is some trajectory $ \langle x_0, x_1, \ldots, \rangle$ of the system that starts from state $x_0 \in \Xx_0$ and visits $\Xx_{VF}$ infinitely often.
Let the infinite sequence $\langle y_0, y_1, \ldots,  \rangle $ denote the states that are visited in $\Xx_{VF}$ in that order, \textit{i.e.}, the trajectory is $\langle x_0, \ldots, y_0, \ldots ,y_1, \ldots \rangle$.
From conditions~\eqref{eq:tbar_cond_1} and~\eqref{eq:tbar_cond_2}, we have $\Tt(x_0, y_i) \geq 0$  and  $\Tt(y_i, y_j) \geq 0$ for all indices $j > i$, $i,j \in \N$.
As we assume the function $\Tt$ to be bounded, there exists some $\Tt^* \in \R$, such that $\Tt(x,y) \leq \Tt^*$ for every pair of states $x,y \in \Xx$.
Note that $\Tt(x_0, y_0) \leq T^{*}$.
From condition~\eqref{eq:tbar_cond_3}, and induction, we have \[
\Tt(x_0, y_i) \leq \Tt(x_0, y_0) - i \xi \leq T^{*} - i \xi.
\]
As this is true for all $i \in \N$, and we have $\xi \in \R_{ > 0}$, there must exist some $j \in \N$ such that $\Tt(x_0, y_j) < 0$.
This is a contradiction. 
\end{proof}

\subsection{Closure Certificates for LTL Specifications}
\label{subsec:ltl_verif}
To verify whether the system satisfies a desired LTL formula $\phi$, we  first construct the NBA $\Aa = (Q, Q_0, \delta, Q_{Acc})$ that represents the complement of the specification $\neg \phi$.
Observe that the state set of the NBA is finite, and therefore we can denote the set $Q$ as the set $\{0, 1, \ldots, |Q| - 1 \}$.
We then construct the product $\Sys \otimes \Aa = (\Xx', \Xx_0', f')$ of the system $\Sys = (\Xx, \Xx_0, f)$ with the NBA $\Aa$, where: 
\begin{itemize}
    \item $\Xx' = \Xx \times \set{0, \ldots, |Q| - 1 }$ indicates the state set
    \item $\Xx'_0 = \Xx_0' \times \set{ q_0 \mid q_0 \in Q_{0}}$ indicate the initial set of states.
    \item the state transition relation $f'$ is defined as :
 \[ f'((x,q_i)) = \big\{ (x', q_j) \mid q_j \in \delta(q_i, \Ll(x)), \text{ and } x' \in f(x) \big\}. \]

\end{itemize}

To verify whether a given system satisfies a desired LTL property, we make use of a closure certificate on the product $\Sys \otimes \Aa$.

\begin{definition}[Closure Certificate for LTL]
\label{def:tbar_prod}
   Consider a system $\Sys = (\Xx, \Xx_0, f)$ and NBA $\Aa= (Q, Q_0, \delta, Acc)$ representing the complement of an LTL formula $\phi$. A bounded function $\Tt: \Xx \times Q \times  \Xx \times Q \to \R$ is a closure certificate for $\Sys$ and NBA $\Aa$  if there exists a value $\xi \in \R_{ > 0}$ such that
    for all states $x, y \in \Xx$, $x' \in f(x)$ and states $i, j \in Q$, and $i' \in \delta(i, \Ll(x))$, we have:
    \begin{align}
        & \Big( \Tt\big( (x, i), (x', i') \big) \geq 0 \Big) \label{eq:tbar_prod_cond_1} \\
        & \Big( \Tt \big( (x', i'), (y, j) \big) \geq 0 \Big) {\implies} \Big( \Tt \big((x, i), (y, j) \big) \geq 0 \Big) \label{eq:tbar_prod_cond_2} 
    \end{align}
    and for all states $x_0 \in \Xx_0$, $s \in Q_0$, and $\ell, \ell' \in Acc$, we have:
    \begin{align}
        &\Big( \Tt\big((x_0,s),(y,\ell) \big) \geq 0 \Big) \wedge \Big( \Tt( (y,\ell), (y',\ell') ) \geq 0 \Big) \implies \nonumber \\
        & \qquad \Big(\Tt \big( (x_0,s),(y',\ell') \big) \leq \Tt \big( (x_0,s),(y,\ell) \big) - \xi \Big). 
        \label{eq:tbar_prod_cond_3} 
    \end{align}
\end{definition}
Now, we provide the next result of the paper on the verification of LTL specifications using closure certificates on $\Sys \otimes \Aa$.
\begin{theorem}[Closure Certificates verify LTL]
\label{thm:tbar_omega}
   Consider a system $\Sys$ and an LTL formula $\phi$. Let NBA $\Aa$ represent the complement of the specification, \textit{i.e}, $\neg \phi$. The existence of a closure certificate satisfying conditions~\eqref{eq:tbar_prod_cond_1}-\eqref{eq:tbar_prod_cond_3} implies that $\Sys \models_{\Ll} \phi$. \end{theorem}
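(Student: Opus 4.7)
The plan is to argue by contradiction, mirroring the structure of the proof of Theorem~\ref{thm:tbar_visit_finite} but lifted to the product system $\Sys \otimes \Aa$. Suppose $\Sys \not\models_{\Ll} \phi$. Then some trace $w \in TR(\Sys, \Ll)$ satisfies $w \models \neg \phi$, so $w \in L(\Aa)$. Let $\langle x_0, x_1, \ldots \rangle$ be a witnessing state sequence of $\Sys$ and $\langle q_0, q_1, \ldots \rangle$ an accepting run of $\Aa$ on $\langle \Ll(x_0), \Ll(x_1), \ldots \rangle$. By construction of $\Sys \otimes \Aa$, the sequence $\langle (x_0, q_0), (x_1, q_1), \ldots \rangle$ is an execution in the product that starts in $\Xx'_0$ and, because the run is accepting, visits $Acc$ infinitely often. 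Let $\langle (y_0, \ell_0), (y_1, \ell_1), \ldots \rangle$ denote the subsequence of product states whose automaton component lies in $Acc$.

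Next I would use conditions~\eqref{eq:tbar_prod_cond_1} and~\eqref{eq:tbar_prod_cond_2} to lift ``one-step non-negativity'' to ``arbitrary-distance non-negativity.'' By~\eqref{eq:tbar_prod_cond_1}, $\Tt((x_i, q_i),(x_{i+1}, q_{i+1})) \geq 0$ for every $i \in \N$. A backward induction using~\eqref{eq:tbar_prod_cond_2}, starting from any later index and pulling back the first argument step by step, then shows $\Tt((x_i, q_i),(x_j, q_j)) \geq 0$ for all $i \leq j$. Specialising this to the subsequence of accepting product states gives both $\Tt((x_0, q_0),(y_i, \ell_i)) \geq 0$ and $\Tt((y_i, \ell_i),(y_{i+1}, \ell_{i+1})) \geq 0$ for every $i$, which are precisely the two hypotheses of the implication in~\eqref{eq:tbar_prod_cond_3}.

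Finally, I would invoke condition~\eqref{eq:tbar_prod_cond_3} together with the boundedness assumption on $\Tt$, closely following the argument in Theorem~\ref{thm:tbar_visit_finite}. Applying~\eqref{eq:tbar_prod_cond_3} inductively along the sequence $(y_i, \ell_i)$ yields
\[
\Tt\bigl((x_0, q_0),(y_i, \ell_i)\bigr) \;\leq\; \Tt\bigl((x_0, q_0),(y_0, \ell_0)\bigr) - i\,\xi \;\leq\; T^{*} - i\,\xi,
\]
where $T^{*}$ is a finite upper bound on $\Tt$. Since $\xi > 0$, the right-hand side becomes strictly negative for sufficiently large $i$, contradicting the previously established $\Tt((x_0, q_0),(y_i, \ell_i)) \geq 0$. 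Hence no such trace $w$ exists and $\Sys \models_{\Ll} \phi$.

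The main obstacle is not any single estimate but bookkeeping: making sure that the induction using~\eqref{eq:tbar_prod_cond_2} is applied in the correct direction (the first argument shifts backward while the second stays fixed) so that the ``reachability in the product implies non-negativity of $\Tt$'' step is unambiguous, and then cleanly matching the hypotheses of~\eqref{eq:tbar_prod_cond_3} to consecutive accepting visits in the product so that the decreasing-potential argument can be iterated.
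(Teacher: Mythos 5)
Your proposal is correct and follows essentially the same route as the paper: the paper simply observes that conditions~\eqref{eq:tbar_prod_cond_1}--\eqref{eq:tbar_prod_cond_3} make $\Tt$ a persistence closure certificate for the product $\Sys \otimes \Aa$ with $\Xx_{VF}$ taken to be the accepting product states, and then invokes Theorem~\ref{thm:tbar_visit_finite}. You have merely inlined the proof of that theorem (the backward induction via~\eqref{eq:tbar_prod_cond_2} and the decreasing-potential argument), which is exactly the intended argument.
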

\begin{proof}
    Observe that a CC $\Tt$ satisfying conditions~\eqref{eq:tbar_prod_cond_1} to~\eqref{eq:tbar_prod_cond_3} is a CC for the product of $\Sys$ and $\Aa$.
    From Theorem~\ref{thm:tbar_visit_finite}, we observe that the product system  visits accepting states finitely often and so we infer that no trace of the system is in the language of the NBA $\Aa$.
    The proof is now complete.
\end{proof}
\section{Synthesizing Closure Certificates}
\label{sec:synth}
This section presents two approaches to synthesize closure certificates when the dynamical systems under study have state sets which are subsets of $\R^n$, \textit{i.e.}, $\Xx \subseteq \R^n$, and the transition function $f$ is a polynomial.
The first approach we consider is using  a counterexample guided approach via Satisfiability Modulo Theory (SMT) solvers~\cite{barrett_2018_satisfiability}, while the second makes use of standard sum-of-squares (SOS)~\cite{Parrilo_2003} approaches to find closure certificates similar to barrier certificates.
In the following sections we describe the relevant conditions for persistence and verifying $\omega$-regular objectives.
The conditions for safety can be recovered in a straightforward manner and are thus ommitted from the following discussion.

\subsection{SMT-based Approach}
\label{sec:CEGIS}
Counterexample-guided Inductive Synthesis (CEGIS)~\cite{Lezama_2008_thesis} has seen significant use in the synthesis of barrier certificates.
We thus consider conditions to provide a CEGIS approach to find closure certificates.
To find a CC as in Definition~\ref{def:tbar}, we first fix the template of the CC  to be a linear combination of user-defined basis functions:
\[
\Tt(x,y) = \sum_{m = 1}^{z} c_m p_m(x,y),
\]
where functions $p_m$ are user-defined  analytical basis functions over the state variables $x$ and $y$ and $c_1, \ldots, c_z$ are the coefficients.
As an example, we can consider $c_1, \ldots,  c_z$ to be real values, and $\Tt(x,y)$ to be a polynomial.
In such a case, the functions $p_1, \ldots, p_m$ are monomials over $x$ and $y$.
Note that if the values of $x, y \in \Xx$  are fixed, then the only decision variables in $\Tt(x,y)$ are the coefficients $c_m$, $m\in\{1,\ldots,z\}$.

We  sample $2N$ points from the state set $\Xx$ of the system to create the sets $S_1 = \set{ x_1, \ldots, x_N }$, and $S_2 = \set{y_{1}, \ldots, y_{N}}$, and sample $3N$ points from $\Xx_0$, $\Xx_{VF}$, and $\Xx_{VF}$, respectively, to create  sets $S_3 = \set{x_{0,1}, \ldots, x_{0,N}}$,  and $S_4 = \set{z_1, \ldots, z_{2N}}$, respectively.
We then encode the constraints of the closure certificate for every pair of points  as an SMT-query over the theory of linear real arithmetic (LRA)~\cite{dutertre_2006_fast} using z3~\cite{moura_2008_z3} as follows:
\begin{align}
& \underset{x \in S_1}{\bigwedge} \Big(  \Tt(x, x') \geq 0 \Big), \label{eq:smt_tbar_1} \\
& \underset{x \in S_1, y \in S_2}{\bigwedge} \Big( \big(\Tt(x', y) \geq 0 \big) \implies \big(\Tt(x,y) \geq 0 \big)  \Big),  \text{ and } \label{eq:smt_cbar_2} \\
& \underset{x_0 \in S_3, z, z' \in S_{4}}{\bigwedge} \Big( \big(\Tt(x_{0}, z) \geq 0 \big) \wedge \big(\Tt(z, z')\geq 0 \big) \\
& \qquad \implies \big(\Tt(x_{0}, z') \leq \Tt(x_{0}, z) - \xi \big) \Big),  \label{eq:smt_cbar_3} 
\end{align}
where $x' = f(x_{k_1})$ indicates the next state from $x_{k_1}$ following the transition function.
We lastly add a constraint of $\xi$ being larger than some small positive value and then find values $c_1, \ldots, c_z $  for the coefficients and substitute them as a candidate CC $\Tt(x,y)$.

To determine if this candidate is in fact a CC, we now try to find elements $x, y, x_0,z, z' \in \Xx$ such that one of the conditions~\eqref{eq:tbar_cond_1}-\eqref{eq:tbar_cond_3} does not hold.
We do this by encoding the negation of these conditions as an SMT query.
If such a counterexample is found, we add them to the respective set and repeat the process.
If no counterexample is found, then we conclude that this is a CC.

Instead of using an SMT solver to find a  candidate CC, we can instead run our CEGIS loop quicker by strengthening conditions~\eqref{eq:tbar_cond_2}-\eqref{eq:tbar_cond_3} as inequalities of the form: 
    \begin{align}
        & \tau_1 \Tt(x', y) \leq  \Tt(x, y), \label{eq:tbar_cond_2_ineq}\\
        &\Tt(x_0,y) \!-\! \xi \!-\!\Tt(x_0,y') \!\geq\!  \tau_2 \Tt(x_0, y) \!+\! \tau_3\Tt(y, y'), \label{eq:tbar_cond_3_ineq}
    \end{align} 
    for all states $x_0 \in \Xx_0$, and $y, y' \in \Xx_{VF}$.  
where $\tau_1, \tau_2, \tau_3 \in \R_{\geq 0}$ are fixed nonnegative values.
The satisfaction of conditions~\eqref{eq:tbar_cond_2_ineq} and~\eqref{eq:tbar_cond_3_ineq} implies the satisfaction of conditions~\eqref{eq:tbar_cond_2} and~\eqref{eq:tbar_cond_3}, and the search for a candidate CC can be cast as a linear program instead. 
This allows one to use a linear programming solver (such as Gurobi~\cite{gurobi}) to find a candidate CC instead.
We then find a counterexample via SMT queries similar to the earlier approach, and then add the counterexample to our linear program, and search for a candidate CC again.
While conditions \eqref{eq:tbar_cond_2_ineq} and~\eqref{eq:tbar_cond_3_ineq} are more conservative, the search for a candidate is much quicker.

We adopt a similar approach to find a CC for the synchronized product as in Definition~\ref{def:tbar_prod}, that acts as a proof that the traces of the system satisfy an LTL property whose negation is specified by the language of an NBA $\Aa = (\Sigma, Q, Q_0, \delta, Acc)$.
{In this setting, we assume our closure certificates to be piecewise with respect to pairs of states of NBA $\Aa$.
Each piecewise component is then considered to be a linear-combination of some user-defined basis functions.
For every pair of states $i,j \in Q$, we denote the corresponding piecewise component as $\Tt_{i,j}$.
We define each piecewise component as:  \[\Tt_{i,j} (x, y) = \sum_{m = 1}^{z} c_{m.i.j} p_{m,i,j}(x,y), \] where the functions $p_{m,i,j}$ are user-defined basis functions over the states $x, y \in \Xx$, and $c_{m,i,j}$ are the coefficients. 
}
We then encode the constraints as the following conjunctions for every state $x \in S_1$, $y \in S_2$, $x_{0} \in S_3$ and $z,z' \in S_2$ as well as every state $i, {j} \in Q$ such that ${i'} \in \delta({i}, \Ll(x))$, and states $s \in Q_0$ and $\ell, \ell' \in {Acc}$: 
\begin{align}
& \underset{x \in S_1}{\bigwedge} \Big(  \Tt_{i, i'}(x, x' ) \geq 0 \Big), \label{eq:smt_tprod_1} \\
& \underset{x \in S_1, y\in S_2}{\bigwedge} \Big(  \big(\Tt_{i',j}(x,y) \geq 0\big) {\implies} \big( \Tt_{i, j}(x, y) \geq 0  \big)  \Big) \label{eq:smt_tprod_2}, \text{ and }\\
& \underset{x_0 \in S_3, z,z' \in S_{2}}{\bigwedge} \Big(  \big(\Tt_{s,\ell}(x_0, z) \geq 0 \big) \wedge \big( \Tt_{\ell, \ell'}(y, z')\geq 0 \big)  && \nonumber \\
& \qquad \implies \big( \Tt_{s, \ell'}(x_0, z') \leq \Tt_{s,\ell}(x_0, z ) - \xi  \big) \Big).\label{eq:smt_cprod_3} 
\end{align}
In general there is no guarantee of termination when using a CEGIS approach for uncountable state sets.
However, one may strengthen the conditions as specified in~\cite{kong_2018_delta} to guarantee termination of the CEGIS loop.
Instead of using a CEGIS approach, one may also encode the conditions in an SMT solver over the nonlinear theory of reals~\cite{gao_complete_2012} such as dReal~\cite{gao_dreal_2013} or z3~\cite{jovanovic_2013_solving} to search for CCs.
While all the above approaches are NP-hard~\cite{gao_dreal_2013,collins_1975_quantifier,jovanovic_2013_solving}, we find the CEGIS approach to work better in practice compared to searching for a solution in the nonlinear theory of reals.

Note that barrier certificates face many of the same challenges when using a CEGIS approach.
Computationally, however, closure certificates take more time in practice as these are defined over pairs of states rather than over a single state, and so suffer more when the dimension of the state set increases.
 We should add that we have not considered the complexity for finding the NBA representing the complement of the specification, but rather assume this NBA to be given.
While the complexity of NBA complementation is EXPTIME~\cite{safra_1988_complexity}, it takes linear time to complement an LTL formula.
However converting an LTL formula to an NBA has exponential complexiy in the size of the formula~\cite{vardi_2005_automata}.

\subsection{Sum-of-Squares based Approach}
\label{subsec:SOS}
The technique of using semidefinite programming ~\cite{Parrilo_2003} and casting the search for a standard barrier certificates~\cite{prajna_2004_safety} as  SOS  polynomials is particularly important due to the simpler complexity of computation when compared to CEGIS approaches.
We show how one may adopt a SOS approach to find closure certificates.
To do so, we first note that a set $A \subseteq \R^n$ is semi-algebraic if it  can be defined with the help of a vector of polynomial inequalities $h(x)$ as $A = \{ x \mid h(x) \geq 0 \}$, where the inequalities is interpreted component-wise.

To adopt a SOS approach to find CCs as in Definition~\ref{def:tbar}, we consider the sets $\Xx$, $\Xx_0$, and $\Xx_{VF}$ to be semi-algebraic sets defined with the help of vectors of polynomial inequalities $g_{A}$, $g_0$, and $g_{VF}$, respectively.
As these sets are semi-algebraic, the sets $\Xx \times \Xx$ and $\Xx_0 \times \Xx_{VF} \times \Xx_{VF}$ are semi-algebraic as well.
Let their corresponding vectors be $g_{B}$ and $g_{C}$, respectively.
Furthermore, we assume that the user-defined basis functions $p_m$ are monomials and again strengthen the implications in conditions~\eqref{eq:tbar_cond_2}-\eqref{eq:tbar_cond_3} to conditions~\eqref{eq:tbar_cond_2_ineq}-\eqref{eq:tbar_cond_3_ineq}.
Then the search for a CC as in Definition~\ref{def:tbar} reduces to showing that the following polynomials are sum-of-squares:
\begin{align}
  & \Tt(x,x')-\lambda_{A}^T(x)g_{A}(x),  \label{eq:sos_tbar_1} \\
  &\Tt(x,y) - \tau_1\Tt(x',y)-\lambda_{B}^T(x,y)g_{B}(x,y), \text{ and }\label{eq:sos_tbar_2} \\
   & \Tt(x, y') - \xi - \tau_2\Tt(x,y)  \nonumber \\&-\tau_3\Tt(y,y') - \lambda_{C}^T(x,y,y')g_{C}(x,y,y'),  \label{eq:sos_tbar_3} 
\end{align}
where $x' = f(x)$, the multipliers $\lambda_{A}$, $\lambda_{B}$, $\lambda_{C}$, are sum-of-squares over the state variable $x$, the state variables $x,y$, and the state variables $x,y,y'$ over the sets $\Xx$, $\Xx \times \Xx$, and $\Xx_0 \times \Xx_{VF} \times \Xx_{VF}$ respectively, and  $\xi$, $\tau_1$, $\tau_2$, and $\tau_3 \in\R_{ > 0}$ are positive values.

\begin{lemma}
Assume the sets $\Xx$, $\Xx_{0}$, and $\Xx_{VF}$ are semi-algebraic, and there exists a sum-of-squares polynomial $\Tt(x,y)$ satisfying conditions~\eqref{eq:sos_tbar_1}-\eqref{eq:sos_tbar_3}.
Then the function $\Tt(x,y)$ is a CC  satisfying conditions~\eqref{eq:tbar_cond_1}-\eqref{eq:tbar_cond_3}.
\end{lemma}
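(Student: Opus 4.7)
The plan is to invoke the standard fact that any sum-of-squares polynomial is nonnegative on all of $\R^n$, combined with the semi-algebraic descriptions of $\Xx$, $\Xx\times\Xx$, and $\Xx_0\times\Xx_{VF}\times\Xx_{VF}$, so that the three SOS certificates~\eqref{eq:sos_tbar_1}--\eqref{eq:sos_tbar_3} collapse, on the appropriate sets, to the strengthened inequalities~\eqref{eq:tbar_cond_1}, \eqref{eq:tbar_cond_2_ineq}, and~\eqref{eq:tbar_cond_3_ineq}. Conditions~\eqref{eq:tbar_cond_2} and~\eqref{eq:tbar_cond_3} of Definition~\ref{def:tbar} then follow by exploiting the nonnegativity of $\tau_1,\tau_2,\tau_3$.

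First I would fix an arbitrary $x\in\Xx$. Since $\Xx$ is semi-algebraic with defining vector $g_A$, we have $g_A(x)\geq 0$ componentwise; because $\lambda_A$ is SOS, $\lambda_A(x)\geq 0$ componentwise as well, so $\lambda_A^T(x)\,g_A(x)\geq 0$. Nonnegativity of the SOS expression in~\eqref{eq:sos_tbar_1} at $x$ then gives $\Tt(x,f(x))\geq \lambda_A^T(x)\,g_A(x)\geq 0$, which is exactly~\eqref{eq:tbar_cond_1}. The same template applied to~\eqref{eq:sos_tbar_2} on the semi-algebraic set $\Xx\times\Xx$ (with $g_B$ and SOS multiplier $\lambda_B$) yields $\Tt(x,y)-\tau_1\Tt(x',y)\geq 0$ for all $x,y\in\Xx$ and $x'=f(x)$, which is~\eqref{eq:tbar_cond_2_ineq}. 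Since $\tau_1\geq 0$, whenever $\Tt(x',y)\geq 0$ we have $\Tt(x,y)\geq \tau_1\Tt(x',y)\geq 0$, recovering the implication~\eqref{eq:tbar_cond_2}.

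The third clause proceeds identically on the semi-algebraic set $\Xx_0\times\Xx_{VF}\times\Xx_{VF}$ with defining vector $g_C$ and SOS multiplier $\lambda_C$, yielding inequality~\eqref{eq:tbar_cond_3_ineq} for all $x_0\in\Xx_0$ and $y,y'\in\Xx_{VF}$. Under the antecedents $\Tt(x_0,y)\geq 0$ and $\Tt(y,y')\geq 0$ of~\eqref{eq:tbar_cond_3}, nonnegativity of $\tau_2$ and $\tau_3$ makes the right-hand side of~\eqref{eq:tbar_cond_3_ineq} nonnegative, so $\Tt(x_0,y)-\xi-\Tt(x_0,y')\geq 0$, which rearranges to the required consequent $\Tt(x_0,y')\leq \Tt(x_0,y)-\xi$.

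The main (and only) subtlety is the gap between the \emph{unconditional} polynomial inequalities produced by SOS certificates and the \emph{conditional} implications in Definition~\ref{def:tbar}. This gap is bridged precisely by the slack multipliers $\tau_1,\tau_2,\tau_3\in\R_{\geq 0}$: they convert the strengthened unconditional inequalities into the original implications by absorbing the hypothesized nonnegative $\Tt$-values. No deeper algebraic machinery (such as Putinar's Positivstellensatz or an Archimedean assumption) is required, since the argument only needs nonnegativity of the SOS expressions, not a converse representation result.
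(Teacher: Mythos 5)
The paper states this lemma without proof, so there is no in-paper argument to compare against; what you give is the standard derivation such lemmas are meant to encode, and for conditions~\eqref{eq:tbar_cond_1} and~\eqref{eq:tbar_cond_2} it is complete and correct: global nonnegativity of the SOS expressions, plus nonnegativity of $\lambda^T g$ on the corresponding semi-algebraic sets, gives~\eqref{eq:tbar_cond_1} and~\eqref{eq:tbar_cond_2_ineq}, and $\tau_1\geq 0$ turns the latter into the implication~\eqref{eq:tbar_cond_2}. Two minor remarks: you correctly never use the hypothesis that $\Tt$ itself is SOS, and you should not --- if $\Tt\geq 0$ everywhere, then taking $y'=y''$ in~\eqref{eq:tbar_cond_3} makes that condition unsatisfiable (for nonempty $\Xx_0$, $\Xx_{VF}$); also, Definition~\ref{def:tbar} requires $\Tt$ to be bounded, which is worth a sentence (it follows from compactness of the state set).

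The one genuine problem is the third clause. You assert that~\eqref{eq:sos_tbar_3} ``proceeds identically,'' yielding~\eqref{eq:tbar_cond_3_ineq}, but the expression printed in~\eqref{eq:sos_tbar_3} is $\Tt(x,y')-\xi-\tau_2\Tt(x,y)-\tau_3\Tt(y,y')-\lambda_C^T g_C$, and its nonnegativity on $\Xx_0\times\Xx_{VF}\times\Xx_{VF}$ gives $\Tt(x_0,y')\geq\xi+\tau_2\Tt(x_0,y)+\tau_3\Tt(y,y')$ --- the leading $\Tt$-term has the opposite sign, and the $+\Tt(x,y)$ term of~\eqref{eq:tbar_cond_3_ineq} is absent. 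Under the antecedents of~\eqref{eq:tbar_cond_3} this yields $\Tt(x_0,y')\geq\xi>0$, i.e., the potential \emph{increases}, and the required decrease $\Tt(x_0,y')\leq\Tt(x_0,y)-\xi$ does not follow. Your mechanical step is the right one for the evidently intended expression $\Tt(x,y)-\xi-\Tt(x,y')-\tau_2\Tt(x,y)-\tau_3\Tt(y,y')-\lambda_C^T g_C$, which is the SOS counterpart of~\eqref{eq:tbar_cond_3_ineq}; a careful proof should either work from that corrected condition and say so, or note explicitly that~\eqref{eq:sos_tbar_3} as printed does not support the claimed deduction, rather than asserting the step goes through unchanged.
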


Since there are finitely many letters $\sigma \in \Sigma$, without loss of generality, one can partition the set $\Xx$ into finitely many partitions $\Xx_{\sigma_1}, \ldots, \Xx_{\sigma_p}$, where for all $x \in \Xx_{\sigma_m}$, we have $\Ll(x) = \sigma_m$.
Given an element $\sigma_m \in \Sigma$, we can uniquely characterize the relation $\delta_{\sigma_i}$ as $(q'_i, q_i) \in \delta_{\sigma_i}$ if and only if  $q'_i \in \delta(q_i, \sigma_i)$.
Assume that the sets $\Xx$, $\Xx_{0}$, and $\Xx_{\sigma_m}$, for all $\sigma_m$, are semi-algebraic and characterized by polynomial vectors of inequalities $g(x)$, $ g_0(x)$, and $g_{\sigma_m, A}(x)$, respectively.
Furthermore, we consider polynomial vectors of inequalities $g_{(\sigma_m), B}(x, y)$ over the product space $\Xx \times \Xx$, and $g_{(\sigma_m, C)}(x, y, y')$ over $\Xx_0 \times \Xx \times \Xx$, respectively.
Let  the state transition function $f:\Xx \to \Xx$ be a polynomial function. Now, one can reduce the search for a CC to showing that the following polynomials are SOS for all states $x, y, y' \in \Xx$, $x' = f(x)$, and $x_0 \in \Xx_0$, and states $i, j \in Q$, $s \in Q_0$, and $\ell, \ell' \in {Acc}$, and letters $\sigma_m \in \Sigma$, such that $i' \in \delta_{\sigma_m} (i)$:
\begin{align}
&\Tt_{i',i}(x, x')-\lambda^T_{\sigma_m, A}(x)g_{\sigma_m, A}(x),   \label{eq:sos_1} \\
&\Tt_{i', j}(x, y) - \tau_1\Tt_{i, j}(x', y) \nonumber \\
&-\lambda^T_{\sigma_m, B}(x, y) g_{\sigma_m, B}(x, y), \text{ and }  \label{eq:sos_2}  \\
 & \Tt_{s, \ell'}(x_0, y') - \xi - \tau_2 \Tt_{s, \ell}(x_0, y) - \tau_3 \Tt_{\ell,\ell'}(y, y')\nonumber\\
  &- \lambda^T_{\sigma_m, C}(x_0, y, y') g_{\sigma_m,C}(x_0, y, y'), \label{eq:sos_3}     
 \end{align}
 are sum-of-squares, where $\lambda^T_{\sigma_i, A}$, $\lambda^T_{\sigma_i, B}$, and $\lambda^T_{\sigma_i, C}$ are sum-of-squares polynomials over their respective regions and $\tau_1, \tau_2, \tau_3, \xi \in \R_{> 0}$ are positive values.

\begin{lemma}
Assume the sets $\Xx$, $\Xx_{0}$, $\Xx_{VF}$, and $\Xx_{\sigma_i}$ for all $\sigma_i$ are semi-algebraic, and there exists sum-of-squares polynomials $\Tt_{i,j}(x,y)$ satisfying conditions~\eqref{eq:sos_1}-\eqref{eq:sos_3} for every $i, j \in Q$.
Then the function $\Tt \big( (x, i), (y, j) \big)$ defined piecewise as $\Tt_{i, j}(x,y)$ for all $i, j \in Q$ is a CC for the product satisfying conditions~\eqref{eq:tbar_prod_cond_1}-\eqref{eq:tbar_prod_cond_3}.
\end{lemma}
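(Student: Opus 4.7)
The plan is to argue, one condition at a time, that the SOS nonnegativity on the appropriate Cartesian product of semi-algebraic regions forces the pointwise inequalities \eqref{eq:tbar_prod_cond_1}--\eqref{eq:tbar_prod_cond_3}, exploiting the standard fact that a sum of squares is globally nonnegative together with Positivstellensatz-style bookkeeping for the multipliers.

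First I would record the following elementary observation. If $p(\cdot)$ is SOS on $\R^n$, then $p(\cdot)\ge 0$ everywhere; and for any semi-algebraic set $A=\{z:g(z)\ge 0\}$ (componentwise), if $\lambda(\cdot)$ is a vector of SOS polynomials of matching length, then $\lambda^T(z)g(z)\ge 0$ for every $z\in A$. Combining these two facts, if $q(z)-\lambda^T(z)g(z)$ is SOS, then $q(z)\ge \lambda^T(z)g(z)\ge 0$ for all $z\in A$. This is the only analytic ingredient I will use.

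Next I would discharge the three conditions in turn. For condition \eqref{eq:tbar_prod_cond_1}, fix $x\in \Xx$ and let $\sigma_m=\Ll(x)$, so that $x\in \Xx_{\sigma_m}$ and hence $g_{\sigma_m,A}(x)\ge 0$. For any $i\in Q$ and any $i'\in \delta(i,\Ll(x))=\delta_{\sigma_m}(i)$, the hypothesized SOS property \eqref{eq:sos_1} combined with the observation above gives $\Tt_{i,i'}(x,x')\ge \lambda^T_{\sigma_m,A}(x)\,g_{\sigma_m,A}(x)\ge 0$ whenever $x'=f(x)$, which by the piecewise definition is exactly $\Tt((x,i),(x',i'))\ge 0$. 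For condition \eqref{eq:tbar_prod_cond_2}, fix $(x,y)\in \Xx\times\Xx$ with $\sigma_m=\Ll(x)$ and $i'\in\delta_{\sigma_m}(i)$. The SOS hypothesis \eqref{eq:sos_2}, evaluated on the semi-algebraic region $\Xx_{\sigma_m}\times\Xx$ described by $g_{\sigma_m,B}$, yields $\Tt_{i,j}(x,y)\ge \tau_1\,\Tt_{i',j}(x',y)$ with $\tau_1\ge 0$. Hence $\Tt_{i',j}(x',y)\ge 0$ forces $\Tt_{i,j}(x,y)\ge 0$, i.e.\ the required implication in the product. Condition \eqref{eq:tbar_prod_cond_3} is handled analogously on $\Xx_0\times \Xx\times\Xx$ (suitably intersected with the $\Xx_{\sigma_m}$ partition): the SOS hypothesis \eqref{eq:sos_3} delivers the strengthened inequality
\begin{equation*}
\Tt_{s,\ell}(x_0,y)-\xi-\Tt_{s,\ell'}(x_0,y')\ \ge\ \tau_2\,\Tt_{s,\ell}(x_0,y)+\tau_3\,\Tt_{\ell,\ell'}(y,y'),
\end{equation*}
and, because $\tau_2,\tau_3\ge 0$, whenever the two antecedents $\Tt_{s,\ell}(x_0,y)\ge 0$ and $\Tt_{\ell,\ell'}(y,y')\ge 0$ hold, the right-hand side is nonnegative, which gives $\Tt_{s,\ell'}(x_0,y')\le \Tt_{s,\ell}(x_0,y)-\xi$, exactly \eqref{eq:tbar_prod_cond_3}.

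Finally I would observe that the three verified inequalities are precisely the defining conditions of a closure certificate for the product system $\Sys\otimes\Aa$ in Definition~\ref{def:tbar_prod}, so the piecewise function $\Tt((x,i),(y,j))=\Tt_{i,j}(x,y)$ is a CC of the required kind. I anticipate that the only delicate point is the second one: one must carefully show that the strengthening from an SOS inequality to the original implication \eqref{eq:tbar_prod_cond_2}--\eqref{eq:tbar_prod_cond_3} does not silently require the multipliers $\tau_i$ to vanish, and that the partitioning by labels $\sigma_m$ covers every $(x,i)\mapsto(x',i')$ transition of the product exactly once, since the edges of $\Aa$ out of $i$ on letter $\Ll(x)$ are the ones encoded by $\delta_{\sigma_m}(i)$. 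Once these two bookkeeping points are addressed, the argument reduces to the single-line Positivstellensatz observation applied three times.
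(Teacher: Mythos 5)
The paper states this lemma without proof, so there is nothing to compare against line by line; your argument is the standard Lagrangian/S-procedure relaxation one would expect (SOS $\Rightarrow$ global nonnegativity, $\lambda^T g \geq 0$ on the semi-algebraic set, hence the bracketed polynomial dominates $0$ there, and nonnegativity of the $\tau$-multipliers turns the strengthened inequalities back into the implications of Definition~\ref{def:tbar_prod}). Your handling of conditions~\eqref{eq:tbar_prod_cond_1} and~\eqref{eq:tbar_prod_cond_2} is correct, and you are right that the partition of $\Xx$ by labels, together with the restriction to $i' \in \delta(i,\Ll(x))$, gives exactly the transitions of the product. (A small notational caveat: the paper defines $\delta_{\sigma}$ so that $i' \in \delta_{\sigma}(i)$ means $i \in \delta(i',\sigma)$, i.e., the edge runs from $i'$ to $i$; your reading swaps source and target, but the argument is invariant under this relabeling.)

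The one point you should not pass over silently is condition~\eqref{eq:sos_3}. As printed, it asserts that
$\Tt_{s,\ell'}(x_0,y') - \xi - \tau_2\Tt_{s,\ell}(x_0,y) - \tau_3\Tt_{\ell,\ell'}(y,y') - \lambda^T_{\sigma_m,C}\,g_{\sigma_m,C}$
is SOS, which on the region yields $\Tt_{s,\ell'}(x_0,y') \geq \xi + \tau_2\Tt_{s,\ell}(x_0,y) + \tau_3\Tt_{\ell,\ell'}(y,y')$, i.e., an \emph{increase} of the certificate by at least $\xi$ along accepting pairs, whereas~\eqref{eq:tbar_prod_cond_3} demands a \emph{decrease}. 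The inequality you display,
$\Tt_{s,\ell}(x_0,y) - \xi - \Tt_{s,\ell'}(x_0,y') \geq \tau_2\Tt_{s,\ell}(x_0,y) + \tau_3\Tt_{\ell,\ell'}(y,y')$,
is the product analogue of~\eqref{eq:tbar_cond_3_ineq} and is what the authors evidently intend, but it is not what~\eqref{eq:sos_3} literally delivers; taken at face value,~\eqref{eq:sos_3} does not imply~\eqref{eq:tbar_prod_cond_3} and the lemma as printed would fail. A complete proof should either state that~\eqref{eq:sos_3} carries a sign typo and work with the corrected polynomial $(1-\tau_2)\Tt_{s,\ell}(x_0,y) - \Tt_{s,\ell'}(x_0,y') - \xi - \tau_3\Tt_{\ell,\ell'}(y,y') - \lambda^T_{\sigma_m,C}\,g_{\sigma_m,C}$, or observe that the increasing variant still refutes infinitely many accepting visits because $\Tt$ is assumed bounded above (so the potential cannot grow by $\xi$ infinitely often), in which case~\eqref{eq:tbar_prod_cond_3} itself must be replaced by its mirror image. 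Apart from this sign bookkeeping, your proof is sound.
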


To determine whether the above equations are SOS, one can make use of solvers such as~\cite{prajna_2002_introducing}.
The complexity of determining whether the above equations are SOS is $O\big(\binom{2n+d}{d} \times \binom{2n+d}{d}\big)$, when searching for CCs for safety or ensuring finite visits, where $n$ is the dimension of the state set, and $2d$ is the degree of the polynomial.
The complexity of verifying LTL specifications is\footnote{Determining whether a polynomial in $n$ variables and degree $d$ are SOS can be reduced to a semidefinite program in $O\big(\binom{n+d}{d} \times \binom{n+d}{d}\big)$ variables \cite{Parrilo_2003}.} polynomial in $O\big( {2^{|\phi|}}^{2} \times 
\binom{2n+2d}{d} \times \binom{2n+2d}{d}\big)$, where $|\phi|$ indicates the size of the LTL formula.
This is because the closure certificate is a function of pairs of the state set of the system and there are at most $|Q|^2$ many pairs of transitions in an automaton.
The number of states of the NBA is $O\big(2^{|\phi|} \big)$, where $|\phi|$ is the size of the formula.
On the other hand, the complexity of determining whether the equations for barrier certificates are SOS is polynomial in $O\big(\binom{n+d}{d} \times \binom{n+d}{d}\big)$~\cite{prajna_2004_safety}.
If the dimension of the system is fixed, then the complexity is polynomial in the degree $2d$ but exponential in the size of the formula $\phi$.
The key issue when using an SOS approach, however, is that there may be polynomials that satisfy the above constraints but are not SOS.
Furthermore, one cannot directly encode the implication in SOS, and, hence, suffers from the conservatism of having to satisfy a stronger condition.

\section{Subsuming Existing Approaches} 
\label{sec:subsumption}

We show that our approach generalizes the existing class of techniques using state triplet introduced in~\cite{wongpiromsarn_2015_automata} for the verification of continuous-space systems against linear temporal logic properties.
The state triplet technique has been used for the verification and synthesis for stochastic systems~\cite{jagtap_2018_temporal,jagtap_2020_formal}, for networks of systems~\cite{anand_2022_small,anand_2021_compositional}, and in motion-planning for nonlinear systems~\cite{he_2020}.
Here, the transition map is a function, and the state set is a subset of $\R^n$.
First, we present the details of the state triplet approach briefly, in Section~\ref{subsec:triplet}. Then in 
Section~\ref{subsec:subsumption}, we show how one can use closure certificates  to guarantee satisfaction of LTL properties when the state triplet approach provides a guarantee as well.

\subsection{The State Triplet Approach}
\label{subsec:triplet}

Consider a system $\Sys = ( \Xx, \Xx_0, f)$, where $\Xx \subseteq \R^n$, and $f$ is a state transition function.
Consider a NBA $\Aa = (\Sigma, Q, Q_0, \delta, Acc)$ that represents the complement of the desired LTL formula $\phi$, and a labeling function $\Ll: \Xx \to \Sigma$.
The key idea of the state triplet approach is to find barrier certificates between edges of the automaton to disallow the system from visiting an accepting state.
This ensures that $\Sys \not\models_{\Ll} \neg \phi$, and so we have $\Sys \models_{\Ll} \phi$.

The steps of the approach are as follows:
\begin{enumerate}
    \item Consider all the simple paths in the NBA that start from an initial state and reach an accepting state.
    \item Break these paths into a sequence of state triplets $(q_m, q'_m, q''_m)$ (or edge pairs $(e_m, e'_m)$).
    \item Search for a barrier certificate to ``cut'' at least one triplet from each path.
    \item If we can cut at least one triplet along each path, we can conclude that $\Sys \models_{\Ll} \phi$, and if not this approach is inconclusive.
\end{enumerate}

To help illustrate this approach consider a system $\Sys = (\Xx, \Xx_0, f)$, and a finite alphabet $\Sigma = \set{a_0, a_1}$; the labeling map $\Ll$ naturally partition the set $\Xx$ into two sets $\Xx_{a_1}$ and $\Xx_{a_2}$.
 Let the NBA $\Aa = ( \Sigma,Q, Q_0, \delta, {Acc})$ in Figure~\ref{fig:aut_eg_1} represent the complement of an LTL specification of interest, where  $\Sigma = \{a_0, a_1\}$,  $Q = \{q_0, q_1, q_2, q_3\}$,  $Q_0 = \{q_0 \}$,  ${Acc} = \{ q_3 \}$, and $\delta$ is specified by the edges in the graph.
This NBA accepts those words which start with a $a_1$ and have at least two $a_0$'s in them.
It represents the LTL formula $\phi = a_1 \wedge \mathsf{X} \mathsf{F} (a_0 \wedge \mathsf{F} a_0)$.
There is one simple path starting from the initial state $q_0$ that reaches the accepting state $q_3$.
This path corresponds to the sequence of states $(q_0, q_1, q_2, q_3)$ and can be broken into two triplets $(q_0, q_1, q_2)$ and $(q_1, q_2, q_3)$.
The first state triplet corresponds to the edge pair $((q_0, q_1) , (q_1, q_2) )$ which are labeled by the pair of letters $(b,a)$, and the second to the edge pair $((q_1, q_2), (q_2, q_3))$ which are labeled by the pair of letters $(a_0, a_0)$.

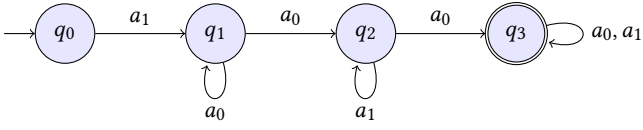
\begin{figure}[t]
    \centering
    \begin{tikzpicture}[node distance =2cm]
    \node[initial, state, draw, initial text =,fill=blue!10!white] (0) at (0,0) {$q_0$};
    \node[,state, fill=blue!10!white,] (1) at (2,0) {$q_1$};
    \node[,state, fill=blue!10!white,] (2) at (4,0) {$q_2$};
    \node[accepting,state, fill=blue!10!white,] (3) at (6,0) {$q_3$};    
    \path[->]
    (0) edge node[above]{$a_1$} (1)
    (1) edge[loop below] node{$a_0$} (1)
    (1) edge node[above]{$a_0$} (2)
    (2) edge node[above]{$a_0$} (3)
    (2) edge[loop below] node{$a_1$} (2)
    (3) edge[loop right] node{$a_0 , a_1$} (3);
    \end{tikzpicture}
    \caption{Example NBA $\Aa$ which represents the complement of a safety to illustrate the state triplet approach.}
    \label{fig:aut_eg_1}
\end{figure}

To cut the transitions along the first state triplet $(q_0, q_1, q_2)$, we try to find a barrier certificate, where the initial set of states are all the states with the label $a_1$ (corresponding to the edge $(q_0, q_1)$), \textit{i.e.} all the states of the system in $\Xx_{a_1}$. 
The unsafe states are all the states with a label of $a_0$ (corresponding to the edge $(q_1, q_2)$), \textit{i.e.} all the states in the set $\Xx_{a_0}$.
The existence of a barrier certificate, proves that no trace of the system can visit a state with label $a_0$, after visiting a state with label $a_1$, and so cannot correspond to the run $(q_0, q_1, q_2)$ in the automaton.
This cuts the path from the initial state $q_0$ to the accepting state $q_3$ of the automaton and shows that no trace of the system can take this corresponding path in the NBA.
As there are no other simple paths to the accepting state, we conclude that no trace of the system is in the language of the NBA.
If we fail to find a barrier certificate for the first triplet, we then search for a barrier certificate in the next triplet $(q_1, q_2, q_3)$.
As the edges of the states in the triplet have the same label $a_0$, we cannot find a barrier certificate where the initial set and unsafe set are both $\Xx_{a_0}$.
If we fail to find a barrier certificate for both triplets, then our approach is inconclusive.

As the state triplet approach proves that no trace of the system can reach the accepting state, one expects that it can be leveraged in a similar fashion to bounded model checking.
Ideally unrolling the automaton for $k$-steps would allow one to verify that no trace of the system visits the accepting state more than $k$ times.
Unfortunately, this is not true, and the state triplet approach does not benefit when one unrolls more than once.

\begin{lemma}
\label{lem:triplet_twice}
    Consider a NBA $\Aa = (\Sigma, Q, Q_0, \delta, Acc)$, whose simple paths from the initial states $Q_0$ to the accepting states ${Acc}$ have been divided into $k$ state triplets $(q_m, q'_m, q''_m)$ for all $1 \leq m \leq k$, such that $q'_m \in \delta(q, a_m)$ and $q''_m \in \delta(q,b_m)$, 
    for some $a_m, b_m \in \Sigma$.
     Unrolling the automaton more than once does not lead to finding a new triplet with labels that have not been considered before.
\end{lemma}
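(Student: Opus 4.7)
The plan is to compare, for each $k \geq 2$, the collection of label pairs attached to state triplets in the $k$-times unrolled NBA $\Aa^{(k)}$ against those arising in the once-unrolled NBA $\Aa^{(1)}$, and to argue that the former is contained in the latter. I would formalize $\Aa^{(k)}$ as an automaton whose state set is $Q \times \{0, 1, \ldots, k\}$, with transitions staying inside layer $i$ when the successor is non-accepting and advancing from layer $i$ to layer $i+1$ when the successor belongs to $Acc$. Initial states are $(q_0, 0)$ with $q_0 \in Q_0$ and final accepting states are $(q_f, k)$ with $q_f \in Acc$, so a simple path of $\Aa^{(k)}$ corresponds to a walk in $\Aa$ whose sequence of accepting-state visits is recorded by the layer index.

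The key step is a classification of the state triplets $(p_{i-1}, p_i, p_{i+1})$ that appear along an arbitrary simple path $\pi$ of $\Aa^{(k)}$. Each such triplet falls into one of two types: (i) both of its edges stay within a single layer of $\pi$, or (ii) its middle vertex $p_i$ is a layer-crossing endpoint, so that the projection of $p_i$ to $Q$ lies in $Acc$ and $\pi$ both enters it from one layer and leaves it into the next. For type (i), projecting onto $Q$ yields a two-edge segment of a simple path in $\Aa$; the corresponding label pair is already enumerated in the triplet decomposition of the original NBA. For type (ii), the label pair $(a, b)$ consists of an incoming edge label of some $q_f \in Acc$ and an outgoing edge label of the same $q_f$; such a pair can be realized by a simple path in $\Aa^{(1)}$ that reaches $q_f$ along $a$ in layer $0$ and departs from $q_f$ along $b$ in layer $1$, so the pair has already been considered when unrolling once.

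Putting these two cases together, I would conclude that every label pair arising in $\Aa^{(k)}$ for $k \geq 2$ is of one of the two types above and hence already appears in $\Aa^{(1)}$, establishing the lemma. The main obstacle I anticipate is verifying the second half of the classification: for every accepting state $q_f$ and every pair $(a, b)$ that occurs as a boundary label pair in some $\Aa^{(k)}$, I must construct an explicit simple path in $\Aa^{(1)}$ that realizes this pair around a layer-crossing triplet. This demands a small gluing argument combining a reachable prefix to $q_f$ in $\Aa$, the layer-crossing edge labeled $a$, the edge out of $q_f$ labeled $b$, and a continuation to a top-layer accepting state, while respecting the simple-path constraint within each individual layer; corner cases in which the only available outgoing transitions from $q_f$ are self-loops will require special attention.
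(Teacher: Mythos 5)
Your proposal follows essentially the same route as the paper's proof: the paper also models repeated unrolling as a feed-forward layered copy of the automaton (states $Q$, $Q'\setminus Q$, $Q''\setminus Q'$ in place of your explicit layer indices) and argues by the same two-way classification that every triplet of the twice-unrolled automaton either lies within a layer or straddles a boundary at an accepting state, and in either case has an isomorphic, already-considered counterpart one layer down. The one obstacle you flag---realizing each boundary label pair along a genuine \emph{simple} initial-to-accepting path of the once-unrolled automaton---is likewise left implicit in the paper's proof, which simply asserts the existence of an ``analogous state triplet'' with the same labels, so your attempt is at the same level of rigor as the original.
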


We present the proof of Lemma~\ref{lem:triplet_twice} in Appendix~\ref{ap:proof_lem_twice}.
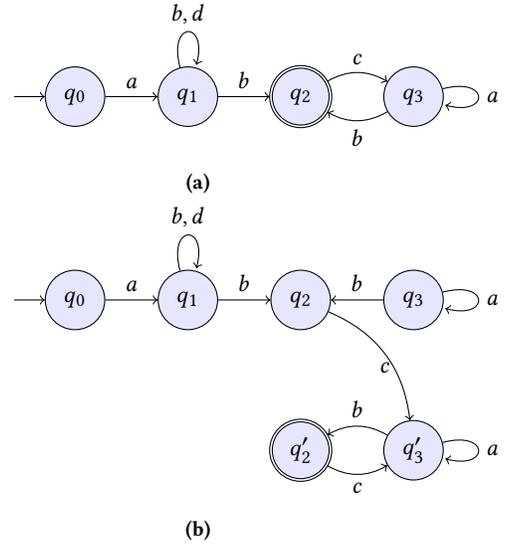
\begin{figure}[t]
    \begin{subfigure}{0.3\textwidth}
    \centering
    \begin{tikzpicture}[node distance =1cm]
    \node[initial, state, draw, initial text =,fill=blue!10!white] (0) at (0,0) {$q_0$};
    \node[,state, fill=blue!10!white,] (1) at (1.5,0) {$q_1$};
    \node[ accepting,state, fill=blue!10!white,] (2) at (3,0) {$q_2$};
    \node[state, fill=blue!10!white,] (3) at (4.5,0) {$q_3$};
    \path[->]
    (0) edge node[above]{$a$} (1)
    (1) edge[] node[above]{$b$} (2)
    (2) edge[bend left] node[above]{$c$} (3)
    (3) edge[bend left] node[below]{$b$} (2)
    (1) edge[loop above] node{$b,d$} (1)
    (3) edge[loop right] node{$a$} (3);
    \end{tikzpicture}
    \caption{}
    \label{subfig:aut_unroll}
    \end{subfigure}
      \begin{subfigure}{0.3\textwidth}
          \centering
   \begin{tikzpicture}[node distance =2cm]
    \node[initial, state, draw, initial text =,fill=blue!10!white] (0) at (0,0) {$q_0$};
    \node[,state, fill=blue!10!white,] (1) at (1.5,0) {$q_1$};
    \node[,state, fill=blue!10!white,] (2) at (3,0) {$q_2$};
    \node[state, fill=blue!10!white,] (3) at (4.5,0) {$q_3$};
    \node[ accepting,state, fill=blue!10!white,] (5) at (3,-2) {$q'_{2}$};
    \node[state, fill=blue!10!white,] (6) at (4.5,-2) {$q'_{3}$};
    \path[->]
    (0) edge node[above]{$a$} (1)
    (1) edge node[above]{$b$} (2)
    (2) edge[bend left] node[below]{$c$} (6)
    (3) edge node[above]{$b$} (2)
    (1) edge[loop above] node[]{$b,d$} (1)
    (3) edge[loop right] node[right]{$a$} (3)
    (6) edge[loop right] node{$a$} (6)
    (6) edge[bend right] node[above]{$b$} (5)
    (5) edge[bend right] node[below]{$c$} (6);
    \end{tikzpicture}
    \caption{}
    \label{subfig:aut_unrolled}
    \end{subfigure}
    \caption{Example NBA $\Aa$ (Figure~\ref{subfig:aut_unroll}) and its unrolling (Figure~\ref{subfig:aut_unrolled}) from Section~\ref{subsec:triplet}. }   \label{fig:aut_triplet}
\end{figure}

Now, with an example, we demonstrate why unrolling the automaton once might help in finding state triplets.
Consider the NBA $\Aa = (\Sigma, Q, Q_0, \delta, {Acc})$ in Figure~\ref{subfig:aut_unroll}, with $\Sigma =\{ a,b,c,d \}$, $Q = \{q_0, q_1, q_2, q_3 \}$, ${Acc} = \{ q_2 \}$, and the transition relation specified by the edges in the graph.
We unroll this automaton to get the automaton $\Aa'$ in Figure~\ref{subfig:aut_unrolled}.
Unrolling the automaton once allows us to consider the triplet $(q_2, q'_3,q'_2) $ whose edge labels correspond to the pair of letters $(c, b)$.
Observe that no state triplet in the original NBA corresponds to this pair of letters. 
Thus even if one was not able to find a barrier certificate for a state triplet in NBA $\Aa$ (for the state triplet $(q_0, q_1, q_2)$), one may still find a barrier certificate for a state triplet in NBA $\Aa'$ (the state triplet $(q_2, q'_3, q'_2)$).
Hence, no trace of the system can visit the accepting state more than once.

We observe that unrolling once does have an impact since we can now consider those state triplets along the simple cycles of the NBA. 
Unfortunately, unrolling twice does not help.
Thus, one is unable to verify those traces of a system which reach and cycle on accepting states more than twice, even if they visit accepting states finitely often.

The state triplet approach is conservative in the following direction: \emph{independently of the state runs in the automaton and of the initial states of the system $\Sys$,  one is required to break the edge pairs of every simple path regardless of what states of the automaton may be encountered before or after}.

\subsection{CC Subsumes State Triplet Approach}
\label{subsec:subsumption}
We now show that our approach generalizes the earlier state triplet one.
Consider a NBA $\Aa = (\Sigma, Q, Q_0, \delta, Acc)$, and let us assume that there exist $k$ barrier certificates $\Bb_1, \Bb_2 \ldots, \Bb_k$ associated with state triplets $(q_m, q_m', q_m'')$, (or edge pairs $(e_m, e'_m)$) for each $1 \leq m \leq k$, that act as a proof that every trace of the system is in $L(\Aa)$.
Furthermore, let $(a_m, b_m)$ be the pairs of letters associated with these triplets.
We divide the states of the NBA $\Aa$  into two sets $Q_l$, and $Q_r$.
A state $q \in Q$ is in the set $Q_l$ if It is not the middle element of a state triplet and there is a path from $q$ to the middle element of \emph{some} state triplet.
A state $q \in Q$ is in the set $Q_r$ if there is a path from the middle element of every state triplet to $q$.
The only states that are in neither of the sets are middle elements of the triplets.
As the state triplet approach ``cuts'' the transitions of the NBA, we observe that no trace of the system starting from a state in $Q_l$ can reach a state in $Q_r$.
Furthermore, we note that every state $s \in Q_0$ is in the set $Q_l$, and every state $\ell \in {Acc}$ is in the set $Q_r$.
We now show how one may use closure certificates to provide guarantees of satisfaction for LTL specifications when the state triplet approach can guarantee the satisfaction of the same specification.

\begin{theorem}
\label{thm:subsumption}
    Consider a System $\Sys = (\Xx, \Xx_0, f)$, labeling map $\Ll: \Xx \to \Sigma$, and NBA $\Aa = (\Sigma, Q, Q_0, \delta, Q_{Acc}) $ representing the complement of a desired LTL formula $\phi$. 
    Suppose that there exists barrier certificates $\Bb_1, \ldots, \Bb_k$ that show $ \Sys \models_{\Ll}  \phi$ via the state triplet approach. 
    Then there exists a closure certificate $\Tt$ that also acts as a proof that  $\Sys \models_{\Ll} \phi$.
\end{theorem}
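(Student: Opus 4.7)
The plan is to exploit the observation that the state triplet approach actually establishes a stronger reachability property on the product $\Sys \otimes \Aa$: no trace starting in $\Xx_0 \times Q_0$ can ever reach $\Xx \times Acc$. I would first make this precise by unpacking what each $\Bb_m$ guarantees---namely, that any trace of $\Sys$ from a state in $\Xx_{a_m}$ stays outside $\Xx_{b_m}$---and lifting this to the product to conclude that the edge pair $((q_m, q_m'), (q_m', q_m''))$ is never traversed by any product trace. Since by assumption the triplet barriers cut every simple accepting path of $\Aa$, a straightforward induction on the length of product runs then shows that no product trace starting in $\Xx_0 \times Q_0$ ever visits an accepting configuration, which is precisely the claim that $\Xx \times Q_r$ is unreachable from $\Xx \times Q_l \supseteq \Xx_0 \times Q_0$.

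With this safety property for the product in hand, I would construct a barrier certificate on $\Sys \otimes \Aa$ with unsafe set $\Xx \times Acc$. A natural choice is the indicator-like function $\Bb:\Xx\times Q \to \R$ that equals $-1$ whenever $(x,q)$ is reachable from $\Xx_0 \times Q_0$ in the product and $+1$ otherwise; the three conditions of Definition~\ref{def:bar} are then immediate (initial states are reachable, unsafe states are unreachable by the first step, and successors of reachable states are reachable). Applying the construction used in the proof of Theorem~\ref{thm:bar_to_close} to this $\Bb$ yields a function $\Tt : (\Xx \times Q) \times (\Xx \times Q) \to \R$ taking only the two values $0$ and $-\xi$ for some fixed $\xi>0$, and satisfying the safety closure-certificate conditions~\eqref{eq:tbar_cond_1_safe}--\eqref{eq:tbar_cond_3_safe} on the product.

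It remains to verify that $\Tt$ is also a closure certificate in the sense of Definition~\ref{def:tbar_prod}. Conditions~\eqref{eq:tbar_prod_cond_1} and~\eqref{eq:tbar_prod_cond_2} are simply~\eqref{eq:tbar_cond_1_safe} and~\eqref{eq:tbar_cond_2_safe} specialized to the product system and are thus already established. Condition~\eqref{eq:tbar_prod_cond_3} holds vacuously: for every $x_0 \in \Xx_0$, $s \in Q_0$, $y \in \Xx$, and $\ell \in Acc$, condition~\eqref{eq:tbar_cond_3_safe} gives $\Tt((x_0,s),(y,\ell)) \leq -\xi < 0$, so the first conjunct of the hypothesis of~\eqref{eq:tbar_prod_cond_3} can never be satisfied and the implication is trivially true. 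The main obstacle in this plan is the very first step: rigorously turning the collective witness of $\Bb_1,\ldots,\Bb_k$ into the unreachability of every accepting configuration of the product requires a careful path-decomposition argument, since any hypothetical product run from an initial to an accepting configuration projects to a finite path in $\Aa$ that---by the assumption that the state triplet approach succeeds---must traverse one of the cut triplets, contradicting the corresponding $\Bb_m$. Once that step is in place, the rest of the proof is a direct invocation of Theorem~\ref{thm:bar_to_close} together with the vacuous verification of~\eqref{eq:tbar_prod_cond_3} just described.
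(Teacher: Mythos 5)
Your proof is correct, but it takes a genuinely different route from the paper's. The paper constructs $\Tt$ \emph{explicitly} from the given certificates $\Bb_1,\ldots,\Bb_k$: it partitions the automaton states into $Q_l$ (states from which a middle element of some triplet is reachable), $Q_r$ (states reachable from every middle element), and the middle elements themselves, and then defines $\Tt(x,i,y,j)\in\{0,-\xi\}$ by a case analysis on where $i$ and $j$ sit in this partition and on the signs of $\Bb_m(x)$, $\Bb_m(y)$; conditions~\eqref{eq:tbar_prod_cond_1} and~\eqref{eq:tbar_prod_cond_2} are then verified case by case using the inductiveness of each $\Bb_m$. You instead pass through the semantic fact that a successful triplet decomposition renders $\Xx\times Acc$ unreachable in the product, take the indicator of the product's reachable set as a barrier certificate, and invoke Theorem~\ref{thm:bar_to_close} as a black box. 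Your route is shorter and more modular, but it is non-constructive: the resulting $\Tt$ is defined via the exact reachable set of $\Sys\otimes\Aa$, which cannot be written down from the $\Bb_m$'s, whereas the paper's $\Tt$ is a concrete piecewise combination of the given certificates (in keeping with the synthesis-oriented spirit of the paper, even though the theorem as stated only asserts existence). Notably, both proofs discharge condition~\eqref{eq:tbar_prod_cond_3} in exactly the same way --- vacuously, because $\Tt$ is strictly negative on $\Xx_0\times Q_0$ paired with $\Xx\times Acc$.

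One caution on the step you yourself flag as the obstacle: the path-decomposition argument must handle runs whose automaton projection is \emph{not} simple. A run may linger on a self-loop at a middle element $q'_m$ for many steps before taking the second edge of the triplet, or may leave $q'_m$ and re-enter it along a different simple path. The first situation is handled precisely because $\Bb_m$ forbids reaching $\Xx_{b_m}$ from $\Xx_{a_m}$ at \emph{all} future times, not merely at the next step; the second is handled because the hypothesis cuts a triplet on \emph{every} simple accepting path, so each re-entry route is blocked by some (possibly different) $\Bb_{m'}$. The paper leans on the same soundness fact without spelling it out either, so this is not a gap relative to the paper's standard of rigor, but if you write the argument out you should make these two points explicit.
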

\begin{proof}[Proof (sketch)]
We construct a closure certificate $\Tt: \Xx \times Q \times \Xx \times Q \to \R $ such that,  for $x,y \in \Xx$, and $i, j \in Q$, we have that $ \Tt(x,i, y,j) \geq 0$, if:
\begin{itemize} 
    \item $i \in Q_r$;
    \item $i \in Q_l$ and $j \in Q_l$;
    \item  $i$ is the middle element of triplet $m$, and $\Bb_m(x) \leq 0$, $j \in Q_l$ and $\Bb_m(y) \leq 0$;
    \item $i$ is the middle element of triplet $m$, and $\Bb_m(x) > 0$; or
    \item $i \in Q_l$, $j$ is the middle element of some triplet $m$, and $\Bb_m(y) \leq 0$.
\end{itemize}
Moreover, $ \Tt(x,i,y,j) < 0$, otherwise. 
This certificate clearly guarantees that no trace of the system can reach the accepting states.
A detailed proof is given in Appendix~\ref{ap:subsumption_proof}.
\end{proof}

\section{Case Studies}
We experimentally demonstrate the utility of closure certificates on Kuramoto oscillators and a two-room temperature model.
In the first example, we consider the problem of safety verification. Here, we show that we can verify the safety a $1$ dimensional Kuramoto oscillator via a linear closure certificate when a linear barrier certificate cannot do the same.
We then verify the safety of a $2$ dimensional Kuramoto oscillator by converting the safety objective to an LTL specification.
We then search for a closure certificate over the product of the system and the NBA representing the complement of the specification.
In the second example, we consider the problem of verifying the persistence of a two-room temperature model.
To do so, we convert the objective to an LTL specification, after which we search for a closure certificate over the product of the system and the NBA.

\label{sec:case_studies}
\subsection{Kuramoto Oscillator}
Kuramoto model~\cite{GUO2021106804} has been used widely to describe chemical and biological oscillators, with applications in neuroscience and modern power system analysis.
As a first case study, we consider a system $\Sys = (\Xx, \Xx_0, f)$ to model a Kuramoto oscillator whose dynamics are taken from~\cite{anand_2022_small}, where $\Xx = [0, 2\pi]$ indicates the state set, $\Xx_0 = [\frac{4\pi}{9}, \frac{5\pi}{9}]$ the initial set of states, and $\Xx_u = [\frac{7\pi}{9}, \frac{8\pi}{9}]$ denotes the unsafe set of states.
The transition function $f$ is defined as:
\[ f(x) = x + \tau \Omega + t_s K \sin(-x) -0.532x^2 + 1.69,  \]
where $x \in \Xx$ indicates the phase of the oscillator, $t_s = 0.1$ is the sampling time, $\Omega = 0.01$ is the natural frequency, and $K = 0.0006$ is the coupling strength.

We then search for a linear closure certificate as in Defintion~\ref{def:tbar_safe} to ensure the safety of the system.
To do so, we strengthen the implication in condition~\eqref{eq:tbar_cond_2_safe} to condition~\eqref{eq:tbar_cond_2_ineq}, with $\tau_1 = 1$, and sample $50$ points from the initial, unsafe, and entire state set.
We then solve a linear program to find a candidate closure certificate.
As z3~\cite{moura_2008_z3} cannot handle the function $\sin(-x)$, we instead use the solver dReal~\cite{gao_dreal_2013} to find counterexamples.
We add these counterexamples to the set of samples and repeat the procedure until we find no counterexamples.
We find the closure certificate $\Tt(x,y) = 10 - 4.094y$ that acts as a proof of safety.
The time taken for our CEGIS loop to terminate is around $10$ minutes on a machine running MacOS 11.2 (Intel i9-9980HK with $64$ GB of RAM).
We should note that the linear program encoding the barrier certificate conditions is infeasible when we consider a linear barrier certificate rather than a CC.

{
We now cast the problem of safety verification as a problem of verifying a system against the LTL formula $\mathsf{G} \neg a$, over the set of atomic propositions $AP = \{ a\}$, where a state is marked with label $\{ a\}$ if it is unsafe.
The complement of this specification is $\mathsf{F} a$, and the NBA $\Aa$ representing this is described in Figure~\ref{fig:aut_case_study_1_safe}.
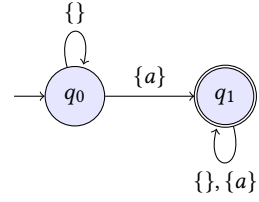
\begin{figure}[t]
    \centering
    \begin{tikzpicture}[node distance =2cm]
    \node[initial, state, draw, initial text =,fill=blue!10!white] (0) at (0,0) {$q_0$};
    \node[accepting,state, fill=blue!10!white,] (1) at (2,0) {$q_1$};
    \path[->]
    (0) edge[loop above] node[above]{$\{ \}$} (0)
    (0) edge[] node[above]{$\{ a\}$} (1)
    (1) edge[loop below] node[below]{$\{ \}, \{a \}$} (1);
    \end{tikzpicture}
    \caption{A (nondeterministic) B\"uchi automaton $\Aa$ representing the LTL formula $\mathsf{F} a$.}
    \label{fig:aut_case_study_1_safe}
\end{figure}
}
{
We consider the system $\Sys = (\Xx, \Xx_0, f)$ to be a two-dimensional Kuramoto oscillator, where $\Xx = [0, \frac{8\pi}{9}] \times [0, \frac{8\pi}{9}]$ denotes the state set.
$\Xx_0 = [0, \frac{\pi}{9}] \times [0, \frac{\pi}{9}]$ denotes the initial set of states and the transition function $f$ is defined as:
\[ f(x_1, x_2) =  \begin{bmatrix} x_1 \\ x_2  \end{bmatrix} +  \begin{bmatrix} \tau\Omega + 1.69 \\ \tau\Omega + 1.69  \end{bmatrix} + K t_s \begin{bmatrix} sin(x_2 - x_1) \\ sin(x_1 - x_2) \end{bmatrix} -0.532\tau\begin{bmatrix} x_1^2 \\ x_2^2  \end{bmatrix}, \]
where $(x_1, x_2) \in \Xx$ indicates the phase of the oscillators, and the remaining constants have the same values as the one dimensional case. 
We consider the alphabet $\Sigma = \{ \{ \}, \{ a \} \}$, and the labeling function $\Ll$ as $\Ll(x_1, x_2) = \{a \}$ if either $x_1 \in [\frac{15\pi}{18}, \frac{8\pi}{9}]$ or $x_2 \in [\frac{15\pi}{18}, \frac{8\pi}{9}]$.
All the other states are assigned a label of the empty set $\{ \}$.
We consider the template of the piecewise components of the closure certificate to be: 
\begin{multline} 
\Tt_{i,j} \big((x_1, x_2), (y_1, y_2) \big) = c_{0,i, j} + c_{1,i, j} y_1   \mathbb{I}_0(x_1, x_2)  + c_{2, i, j}   y_2   \mathbb{I}_0(x_1, x_2) \\ +  c_{3, i, j}   y_1   \mathbb{I}_a(x_1, x_2) + c_{4,i, j}   y_2   \mathbb{I}_a(x_1, x_2) + c_{5,i, j}   y_1 + c_{6, i, j}   y_2,
\end{multline}
for all states $(x_1, x_2), (y_1, y_2) \in \Xx$ and  NBA states  $i, j \in Q$, where the functions $\mathbb{I}_0$, and $\mathbb{I}_a$ are indicator functions over the initial set of states, and states with label $\{a \}$ respectively.}
{
We then search for the piecewise components of the closure certificate via a counterexample-guided approach by collecting round $400$ points from the system.
To do so, we encode the conditions as a linear program, and set the s-procedure coefficients of $\tau_1 = 1$ for the conditions~\eqref{eq:tbar_cond_2_ineq} and the values of $\tau_2 = 1$, and $\tau_3 = 0$ for conditions~\eqref{eq:tbar_cond_3_ineq} to find a candidate closure certificate.
To speed up the search for counterexamples, we randomly sample points and check if the conditions fail to hold. If so we have found a counterexample.
If no such counterexample is found, we then formulate a query in dReal to search for a valid counterexample.
We repeat this process until no counterexamples are found.
The coefficients of the resulting closure certificate are displayed as a table in Appendix~\ref{ap:cc_2d_kur_safe}.
The time taken for this CEGIS loop to terminate is around $1$ hour and $50$ minutes on the reference machine.
We find the value of $\xi$ to be $1$.
}
\subsection{Two Room Temperature Model}
As a second case study, we consider our system $\Sys = (\Xx, \Xx_0, f)$ to be an interconnected two-room temperature model adapted from~\cite{anand_2021_compositional},
where $\Xx = [20, 34] \times [20, 34] \in \R^{2}$ indicate the temperature of the two rooms, $\Xx_0 = [21, 24] \times [21,24]$ indicate the initial set of states, and the transition function is defined as:
\[ f(x_1, x_2) = A \begin{bmatrix} x_1 \\ x_2 \end{bmatrix} + \mu T_h\begin{bmatrix} u(x_1) \\ u(x_2) \end{bmatrix} + \theta \begin{bmatrix} T_e \\ T_e \end{bmatrix},\]
where $x_i$ represents the temperature of room $i$, for all $i\in\{1, 2\}$, the matrix $A$ is 
\[
A := \begin{bmatrix}
1- 2\alpha - \theta - \mu u(x_1) & \alpha \\ 
 \alpha & 1- 2\alpha - \theta - \mu u(x_2) \\ 
\end{bmatrix},
\]
where constants $\alpha =0.004$, $\theta =0.01$, and $\mu =0.15$ represent the conduction factors, and $u(x)$ denotes the temperature controller, and is defined as $u(x_i) = 0.59 - 0.011x_i$. The value $T_h = 40$C denotes the heater temperature, and $T_e = 0$C represents the ambient temperature.
{Let the LTL formula to be verified be $ a_0\implies \mathsf{F} \mathsf{G} \neg a_1$.
This property requires that a system that starts from a state with atomic proposition $a_0$ does not visit the states with atomic proposition $a_1$ infinitely often.
We consider the atomic propositions $AP = \set{a_0, a_1}$, and the alphabet $\Sigma = \{ \{ \}, \{ a_0\}, \{ a_1\}, \{a_0,a_1 \} \}$.
In this setting, we require that if a state sequence of the system starts from $\Xx_0$ then it must visit the region $( [20,26]\times  \Xx) \cup (\Xx \times  [20,26] )$ finitely often. 
The complement of this specification is $a_0 \wedge \mathsf{G} \mathsf{F} a_1$ and the NBA $\Aa$ in Figure~\ref{fig:aut_case_study_live} denotes this complement.
Here, we mark the states $(x_1, x_2) \in \Xx_0$  with the atomic proposition $a_0$.
We mark a state $(x_1, x_2) \in \Xx$ with atomic proposition $a_1$, if $(x_1, x_2) \in ( [20,26] \cup \Xx) \times (\Xx \cup  [20,26]) $.
All other states are not marked with any atomic proposition.
Observe that a state $(x_1, x_2)$ may be marked with both atomic propositions $a_0$, and $a_1$, or neither. 
We define the labeling map as:
\[\Ll(x_1,x_2) = \begin{cases} \{a_0,a_1 \} & \text{ if } (x_1, x_2) \
\text{ is marked with both $a_0,a_1$} \\
\{a_0 \} & \text{ if } (x_1, x_2) 
\text{ is marked with only $a_0$} \\
\{a_1 \} & \text{ if } (x_1, x_2) 
\text{ is marked with only $a_1$} \\ 
\{ \} & \text{ otherwise, } 
\end{cases}\]
We consider the template of the piecewise components of the closure certificate to be specified as:
\begin{multline} 
\Tt_{i,j}\big( (x_1, x_2), (y_1, y_2) \big) = c_{0,i, j} + c_{1,i, j}   x_1  + c_{2, i, j}   x_2  +  c_{3, i, j}   y_1  \\ + c_{4,i, j}   y_2  + c_{5,i, j}   \max(x_1, x_2) + c_{6, i, j}   \max(y_1, y_2) \\
+ c_{7,i, j}   x_1^2 + c_{8,i,j} x_2^2
+ c_{9,i, j}   y_1^2 + c_{10,i, j}   y_2^2 ,
\end{multline}
for all states $(x_1, x_2)$,  and $(y_1, y_2) \in \Xx $ and all states $i,j$ of the NBA $\Aa$ in Figure~\ref{fig:aut_case_study_live}.
We then search for the piecewise components of the closure certificate using a CEGIS approach.
To speed up this, we first solve the linear program with around $100$ points, where we set the values of $\tau_1 = 1$, $\tau_2 = 0.4$, and  $\tau_3 = 0.1 $. 
We then search for counterexamples by first randomly sampling points, after which we use z3 to find counterexamples.
The resulting coefficients are described in a table in Appendix~\ref{ap:Case_study_two_room_cc}.
The time taken to find the closure certificate is around $1.5$ hours on the reference machine.
Finally, we find the value of $\xi$ to be $0.5$ in this example.
}

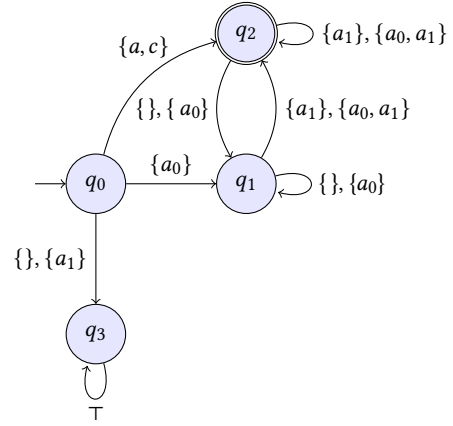
\begin{figure}[t]
    \centering
    \begin{tikzpicture}[node distance =2cm]
    \node[initial, state, draw, initial text =,fill=blue!10!white] (0) at (0,0) {$q_0$};
    \node[,state, fill=blue!10!white,] (3) at (0,-2) {$q_3$};
    \node[ ,state, fill=blue!10!white,] (1) at (2,0) {$q_1$};
    \node[accepting, state, fill = blue!10!white] (2) at (2, 2) {$q_2$};
    \path[->]
    (0) edge[] node[above]{$ \{ a_0\}$} (1)
    (0) edge[bend left] node[above=0.2cm]{ $\{ a,c\}$ } (2)
    (0) edge[] node[left]{$ \{ \}, \{a_1 \}$} (3)
    (3) edge[loop below] node[below]{$\top$} (3)
    (1) edge[bend right] node[right]{$\{a_1\}, \{ a_0,a_1 \}$} (2)
    (1) edge[loop right] node[right]{\{ \}, \{$a_0$\} } (2)
    (2) edge[loop right] node[right]{$\{a_1\}, \{ a_0,a_1 \}$} (2)
     (2) edge[bend right] node[left]{\{ \}, \{ $a_0$\} } (1);   
    \end{tikzpicture}
    \caption{A (nondeterministic) B\"uchi automaton $\Aa$ for the two-room temperature case study from Section~\ref{sec:case_studies}. 
    The automata represents the LTL formula $a_0 \wedge \mathsf{G} \mathsf{F} a_1$.
    Here $\top$ indicates any letter in the alphabet.}
    \label{fig:aut_case_study_live}
\end{figure}

\section{Conclusion}
We proposed a notion of so-called closure certificates that act as a function analog of transition invariants.
Our notion of closure certificates provide an abstraction-free approach to verify dynamical systems against $\omega$-regular properties. 
Our approach of using closure certificates to verify $\omega$-regular properties subsume existing approaches that use barrier certificate to verify $\omega$-regular properties.
As future work, we plan to investigate how one may use approaches such as $k$-induction to allow for a larger class of functions to act as closure certificates.
We also plan on investigating data driven approaches to find these closure certificates as well as investigate their use in synthesizing controllers.

\section{acknowledgements}
The authors thank Mateo Perez and Sriram Sankaranarayanan for valuable discussions as well as the anonymous reviewers for their constructive comments. 
This work was supported by NSF CAREER awards CCF-2146563, and CNS-2145184, and grants  ECCS-2015403, CNS-2039062, and CNS-2111688. 
 \bibliographystyle{ACM-Reference-Format}
\bibliography{bibliography.bib}
\newpage
\appendix

\section{Continuous space example for Theorem~\ref{thm:Bar_Tinv}}
\label{ap:thm_cont_BarTinv}
\begin{figure}[h]
  \begin{center}
  \begin{tikzpicture}[node distance =1.5cm]
    \node[state, draw, initial text =,fill=green!10!white] (0) at (0,0) {$0$};
    \node[state, draw, initial text =,fill=blue!10!white] (1) at (0,1.5) {$1$};

    \node[,state, fill=red!10!white,] (2) at (1.5,0) {$\frac{1}{32}$};
    \node[,state, fill=blue!10!white,] (3) at (1.5,1.5) {$\frac{33}{32}$};

    \node[,state, fill=green!10!white,] (4) at (3,0) {$\frac{1}{16}$};
    \node[,state, fill=blue!10!white,] (5) at (3,1.5) {$\frac{17}{16}$};

    \node[,state, fill=red!10!white,] (6) at (4.5,0) {$\frac{1}{8}$};
    \node[,state, fill=blue!10!white,] (7) at (4.5,1.5) {$\frac{9}{8}$};

    \node[,state, fill=green!10!white,] (8) at (6,0) {$\frac{1}{4}$};
    \node[,state, fill=blue!10!white,] (9) at (6,1.5) {$\frac{5}{4}$};

    \node[,state, fill=red!10!white,] (10) at (7.5,0) {$\frac{1}{2}$};
    \node[,state, fill=blue!10!white,] (11) at (7.5,1.5) {$\frac{3}{2}$};

    \node[] (12) at (0,3) {};
    \node[] (13) at (1.5,3) {};
    \node[] (14) at (3,3) {};
    \node[] (15) at (4.5,3) {};
    \node[] (16) at (6,3) {};
    \node[] (17) at (7.5,3) {};

    \node[] (18) at (0,-1.5) {};
    \node[] (19) at (3,-1.5) {};
    \node[] (20) at (6,-1.5) {};

    \path[->]
    (0) edge node{} (1)
    (2) edge node{} (3)
    (4) edge node{} (5)
    (6) edge node{} (7)
    (8) edge node{} (9)
    (10) edge node{} (11)
    (1) edge node{} (12)
    (3) edge node{} (13)
    (5) edge node{} (14)
    (7) edge node{} (15)
    (9) edge node{} (16)
    (11) edge node{} (17)
    (18) edge node{} (0)
    (19) edge node{} (4)
    (20) edge node{} (8)
    ;
    \end{tikzpicture}
  \end{center}
  \caption{A continuous-space system which has no barrier certificate of degree $2$ which indicates safety. The initial states are denoted in green, while the unsafe states are denoted in red. The dynamics are given by $f(x) = \set{ x +1}$ for all states $x \in \Xx$, where $\Xx = \R$ denotes the state space.}
\end{figure}
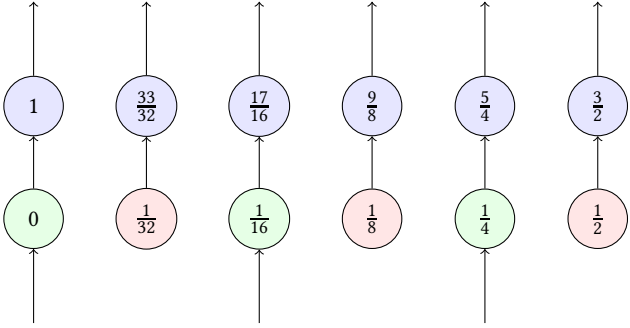
Observe that the above example requires any barrier certificate to be of a degree greater than $2$ by the intermediate value theorem.
Consider the function $\Tt(x,y) = y -x - 1$. This function satisfies condition~\eqref{eq:tbar_cond_1_safe}
as $\Tt(x, x+1) \geq 0$ for any $x \in \Xx$.
Second, if we have $\Tt(x+1, y) \geq 0$, \textit{i.e.}, $y - (x+1) -1 \geq 0$, then we can conclude that $(\Tt(x, y) \geq 0$ and so condition~\eqref{eq:tbar_cond_2_safe} also holds.
Finally for any initial state $x_0$, and unsafe state $x_u$, we have $\Tt(x_0, x_u) < 0$, thus ensuring condition~\eqref{eq:tbar_cond_3_safe} holds.
\section{Proof of Lemma~\ref{lem:triplet_twice}}
\label{ap:proof_lem_twice}
\begin{proof}

First let us unroll the simple cycles of the automaton $\Aa$ that start from an accepting state to construct an automaton $\Aa' = (\Sigma, Q', Q_0, \delta', Q'_{Acc})$ such that $L(\Aa') = L(\Aa)$.  
The states $Q'$ of $\Aa'$ contain all the states $Q$ of $\Aa$ as well as one new state $i'$ for every state $i \in Q$ that can be reached from an accepting state.
Determining which states are reachable from an accepting state can be done in polynomial time~\cite{kozen_2006_theory} in the number of states of the automaton.
Intuitively, these new states are meant to determine the state runs of the system which have reached an accepting state and continue onward.
The set of initial states is  the same as $\Aa$, while the set of accepting states consist of the states $ {\ell'}$, \textit{i.e.}, those accepting states that can be reached from an accepting state.
The transitions $\delta'$ are the same as the transitions for $\Aa$, except for those the transitions that leave an accepting state $ \ell \in {Acc}$ and additional transitions added.
The  transitions added are due to two reasons :
\begin{enumerate}
    \item For every pair of states $i, j \in Q$ that are reachable from an accepting state, and for any $\sigma \in \Sigma$ such that $j \in \delta(i,\sigma) $, we add a transition $j' \in \delta(i', \sigma)$.
    This preserves all the transitions for the newly added states so they behave the same as before. 
    \item For every state $i \in Q$, and every accepting state $\ell \in Acc$, and any letter $\sigma \in \Sigma$ such that $j \in \delta(\ell,\sigma) $, we add a transition $j' \in \delta(\ell, \sigma)$. For every edge that leaves the accepting state, we add an edge to move to the states of the form $q' \in Q' \setminus Q$. Observe that there are no transition from the states in $Q' \setminus Q$ to the ones in $Q$.
\end{enumerate}
Unrolling the automaton once, allows us to now consider the state triplets of the form $(q, q', q'')$, where either $q'$ and $q''$ are in $Q' \setminus Q$ or $q', q' \in Q$ and $q'' \in Q' \setminus Q$.

{
Unfortunately performing the same construction again does not add any new pair of letters to consider.
Let us repeat the above construction on NBA $\Aa'$ to construct the NBA $\Aa'' = (\Sigma, Q'', Q_0, \delta, Q''_f)$, and consider a state triplet $(q, q', q'')$ in $\Aa''$.
First, if all the states $q, q'$, and $q''$ are in the set $Q'$, then they were already a state triplet in the original NBA.
We now consider the different cases where a new state triplet $(q, q', q'')$ may be considered.
\begin{enumerate}
    \item If all the states in the triplet are in $Q''$ or in $Q'$, then there are no new edge pairs to consider (as there exist triplets $(q_1, q'_1, q''_1)$ in $\Aa'$ that have the same labels and have already been considered).
\item The state $q'' \in Q'' \setminus Q$ is newly added while  the states $q$ and $q'$ are in $Q'$. In this case there must be analogous state triplet $(q_1, q'_1, q''_1)$ such that $q''_1 \in Q' \setminus Q$, and the states $q_1$ and $q'_1$ were initially present in $Q$ which correspond to the same pair of letters. 
Thus, again there are no pairs to be considered.
\item The states $q'$ and $q'' $ are in the set $Q'' \setminus Q'$ (they are newly added states) but $q$ is not.
In this case there exists an analogous state triplet $(q_1, q'_1, q''_1)$, such that the state $q_1 \in Q$ and the states $q'_1$ and $q''_1$ are in the set $Q' \setminus Q$ with the same label.
Hence, this again does not add a new edge pair to be considered.
\end{enumerate} 
No other cases need to  be considered, as no state in $Q'' \setminus Q'$ can reach a state in $Q'$ by construction.
As no new pairs of letters are considered when unrolling more than once, we observe that we cannot find barrier certificates with a different initial or unsafe condition.
}
\end{proof}

\section{Proof of subsumption}
\label{ap:subsumption_proof}
\begin{theorem}[Subsuming the state triplet approach]
    Consider a System $\Sys = (\Xx, \Xx_0, f)$, a labeling map $\Ll: \Xx \to \Sigma$, and an NBA $\Aa' = (\Sigma, Q', Q_0, \delta', Q'_{Acc}) $ representing the complement of an $\omega$-regular specification, and let barrier certificates $\Bb_1, \ldots, \Bb_m$ be used to show that $TR(\Sys, \Ll) \cap L(\Aa') = \emptyset$ via the state triplet approach. 
    Then there exists a closure certificate $\Tt$ that also acts as a proof that  $TR(\Sys, \Ll) \cap L(\Aa') = \emptyset$.
\end{theorem}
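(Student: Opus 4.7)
The plan is to define $\Tt : \Xx \times Q \times \Xx \times Q \to \R$ piecewise as a two-valued function, taking value $0$ on the five cases listed in the sketch of Theorem~\ref{thm:subsumption} and $-\xi$ otherwise, for some fixed $\xi > 0$. Verifying that this $\Tt$ is a valid closure certificate in the sense of Definition~\ref{def:tbar_prod} reduces to a structural case analysis on where $i$, $i'$, and $j$ lie in the partition of the NBA states into $Q_l$, $Q_r$, and the set $Q_m$ of middle elements of the chosen triplets.

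For condition~\eqref{eq:tbar_prod_cond_1}, I will split on the location of $i$. If $i \in Q_r$, case~1 of the enumeration applies and we are done. If $i \in Q_l$, then because the state triplet approach cut every simple path from $Q_0$ to $Acc$, the NBA transition $i \to i'$ cannot go directly to $Q_r$, so $i' \in Q_l \cup Q_m$; the first subcase lands in case~2, while in the second the edge $i \to i'$ must realise the first edge of some triplet $m$, forcing $\Ll(x) = a_m$, so~\eqref{eq:bar_cond_1} yields $\Bb_m(x) \leq 0$ and invariance~\eqref{eq:bar_cond_3} propagates this to $\Bb_m(x') \leq 0$, placing $(x, i, x', i')$ in case~5. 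If $i$ is the middle of triplet $m$, a split on the sign of $\Bb_m(x)$ suffices: $\Bb_m(x) > 0$ lands in case~4, while $\Bb_m(x) \leq 0$ rules out $\Ll(x) = b_m$ via~\eqref{eq:bar_cond_2} (so the transition cannot take the cut second edge) and a finer sub-analysis places $(x',i')$ in an admissible case.

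For condition~\eqref{eq:tbar_prod_cond_3}, I will argue the antecedent is never satisfied. Because the state triplet approach cut every simple $Q_0$-to-$Acc$ path, the pair $(s,\ell)$ with $s \in Q_0 \subseteq Q_l$ and $\ell \in Acc$ satisfies none of the five non-negative cases: $s \notin Q_r$ (no initial state is reachable from a middle without being the middle itself), and any $\ell \in Acc$ is forced to lie on the far side of the cut, so neither the $Q_l$-$Q_l$ case nor the middle-element cases apply with this pair of arguments. Consequently $\Tt((x_0, s), (y, \ell)) = -\xi < 0$, the implication in~\eqref{eq:tbar_prod_cond_3} is vacuously true for any $\xi > 0$, and Theorem~\ref{thm:tbar_omega} then yields $\Sys \models_\Ll \phi$.

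The main obstacle is condition~\eqref{eq:tbar_prod_cond_2}. Here $y$ and $j$ are arbitrary and I must show that a two-argument non-negativity property is preserved as the first argument moves one step backwards along a transition. The delicate subcases occur when $j$ is itself the middle element of some triplet $m'$, because then the sign of $\Bb_{m'}(y)$ --- which is not under our control --- interacts with the signs of $\Bb_m(\cdot)$ at $x$ and $x'$, and must be matched up with the labels of the NBA edge $i \to i'$. Handling these subcases requires combining the separation property~\eqref{eq:bar_cond_2}, which excludes certain label/transition combinations at $x$, with the invariance property~\eqref{eq:bar_cond_3}, which propagates signs from $x$ to $x'$, and with the absence of $Q_l$-to-$Q_r$ edges in the NBA. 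Carefully dispatching each of the five non-negative cases (as the hypothesis on $(x',i',y,j)$) against each of the three partition classes of $i$ is the technical heart of the argument.
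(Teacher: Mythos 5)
Your construction is the same as the paper's: the two-valued function that equals $0$ exactly on the five cases of the sketch and $-\xi$ elsewhere, and your treatments of conditions~\eqref{eq:tbar_prod_cond_1} and~\eqref{eq:tbar_prod_cond_3} track the paper's argument (in particular, the observation that $Q_0 \subseteq Q_l$, $Acc \subseteq Q_r$, and no accepting state is a middle element of a triplet forces $\Tt((x_0,s),(y,\ell)) = -\xi$, so the antecedent of~\eqref{eq:tbar_prod_cond_3} is never satisfied and the condition holds vacuously).

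The gap is that you never actually verify condition~\eqref{eq:tbar_prod_cond_2}. You correctly identify it as the technical heart --- the backward step $\Tt((x',i'),(y,j)) \geq 0 \implies \Tt((x,i),(y,j)) \geq 0$ is the only place where the barrier certificates, the labels of the cut edge pairs, and the $Q_l$/$Q_r$/middle partition must all interact --- but you only describe the shape of the case analysis (``dispatching each of the five non-negative cases against each of the three partition classes of $i$'') without discharging any of its branches. The paper's appendix proof does this work explicitly: assuming $\Tt((x',i'),(y,j)) \geq 0$ and $\Tt((x,i),(y,j)) = -\xi$, it splits into three cases ($i \in Q_l$ with $j \in Q_r$; $i$ a middle element of triplet $m$ with $\Bb_m(x)\leq 0$; $j$ a middle element), and in each one combines the already-established condition~\eqref{eq:tbar_prod_cond_1} (which pins down where $i'$ can sit and what $\Tt((x,i),(x',i'))$ must be), the invariance property~\eqref{eq:bar_cond_3} of the relevant $\Bb_m$, and the definition of $\Tt$ to reach a contradiction. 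Several of these branches are genuinely delicate --- for instance, ruling out that $i'$ lands in $Q_r$ when $i$ is a middle element with $\Bb_m(x) \leq 0$, or handling the interaction between $\Bb_{m'}(y)$ and the partition class of $i'$ when $j$ is a middle element --- and until they are carried out your proof is a plan rather than an argument. A secondary caution: your claim for condition~\eqref{eq:tbar_prod_cond_1} that an NBA edge from $Q_l$ can never enter $Q_r$ directly is phrased as a structural fact about the automaton, whereas what the construction actually needs (and what the paper argues) is that no \emph{system transition} can realize such an edge without passing through a middle element whose barrier certificate blocks it; the structural version is not obviously true.
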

\begin{proof}
    Fix $\xi \in \R_{ > 0}$ as a small positive value, then we define a closure certificate $\Tt: \Xx \times Q\times \Xx \times Q \to \R $ based on the following cases:
\begin{enumerate}
    \item The automaton states are either in $Q_l$ or $Q_r$.
    \item The first automaton state is a middle element of some triplet.
    \item The second automaton state is the middle element of some triplet.
\end{enumerate}
We define $\Tt(x,i,y,j)$ for each of the above cases as follows:
\begin{enumerate}
    \item
    We define $\Tt(x,i,y,j) = 0$, if $i$ and $j$ are in the same set, \textit{i.e.} both $i$ and $j$ are either in $Q_l$ or $Q_r$; or if the state $i \in Q_r$, and state $j \in Q_l$ and for every pair of states $x, y \in \Xx$.
    We define $\Tt(x,i,y,j) = -\xi$ if $i \in Q_l$, and $j \in Q_r$.
    Intuitively, this assumes that all states in $Q_l$ are reachable from any state in either $Q_r$ or $Q_l$, however no state in $Q_r$ is reachable from some state in $Q_l$.
    \item Let $i = q_m'$ be a middle element of some triplet, then we define $\Tt(x, i,y,j) $ depending on the value of $\Bb_m(x)$ and $\Bb_m(y)$.
    For all states $x \in \Xx$,  such that $\Bb_m(x) \leq 0$, we  define  $\Tt(x,i,y,j) = 0$ for all states $y \in \Xx$ such that $\Bb_m(y) \leq 0$ and $j \in Q_l \cup \set{i}$; and $\Tt(x,i,y,j) = -\xi$ for all states $j \in Q_r$ and $y \in \Xx$.
    Otherwise, we define $\Tt(x,i,y,j) = 0$ for all states $y \in \Xx$, and $j \in Q$.
    Intuitively, this separates those states that can be reached from $Q_l$, and those states that are in $Q_r$, while allowing the states in $Q_r$ to reach the states in $Q_l$.
    \item  Finally, let $j = q_m'$ be a middle element of some triplet, then we define $\Tt(x,i,y,j) = 0$ for all states $x \in \Xx$, $i \in Q_l$, and every state $y \in \Xx$ such that $\Bb_m(y) \leq 0$ and $\Tt(x,l, y, j) = -\xi$ otherwise.
    Similarly, we define $\Tt(x,i,y,j) = 0$ for all states $x \in \Xx$, $i\in Q_r$, and every state $y \in \Xx$.
    Intuitively, this enforces that starting from a state in $Q_l$,  one does not reach a state where the barrier value is greater than $0$, and the automaton state is the middle element of a triplet.
\end{enumerate}

We now show that $\Tt$ is a closure certificate.
First, we observe that condition~\eqref{eq:tbar_prod_cond_3} trivially holds as all states $qs \in Q_0$ are in the set $Q_l$ and all states $\ell \in Acc$ are in the set $Q_r$, therefore by definition we must have $\Tt(x,s, y, \ell) = -\xi$ for all states $x \in  \Xx_0$, and $y \in \Xx$.
Second, let us assume that condition~\eqref{eq:tbar_prod_cond_1} fails to hold, \textit{i.e.} there exists a state $i \in Q$, and $x \in \Xx$, $x' = f(x)$ and $i' \in \delta(i, \Ll(x))$, such that $\Tt(x,i, x', i') < 0 $.
For this to be true we must have $ \Tt(x, i, x', i') = -\xi$.
Clearly, $i \in Q_l$ and $i' \in Q_r$ cannot hold as the barrier certificates disallow reaching the states in $Q_r$ from $Q_l$ in one step of the transition without passing through the middle element of some triplet.
Thus either $i$ or $i'$ must be the middle element of some triplet.
If $i$ is the middle element of some triplet, then we must have $i' \in Q_r$, and $\Bb_m(x) \leq 0$.
As $\Bb_m(x) \leq 0$, we must have $\Bb_m(x') \leq 0$ as well and so $i'$ cannot be the last element of the triplet.
If $i'$ is another state in $Q_r$ that is to the right of every triplet, then it could not be reached from any state in $Q_l$ and hence cannot be reached from any state that is reachable from $i'$ and therefore $i$ cannot be a middle element of some triplet.
Lastly, if $i'$ is the middle element of some triplet, then it must be the case that $\Bb_m(y) > 0$, and following the last condition of the barrier no state in $Q_l$ should be able to reach $i'$.
Thus condition~\eqref{eq:tbar_prod_cond_2} must hold.
Finally,  assume that condition~\eqref{eq:tbar_prod_cond_3} fails to hold, then there exists some state $x \in \Xx$, $y \in  \Xx$, and states $i$, $j \in Q$, such that for $x' = f(x)$, and $i' \in \delta(i, \Ll(x))$, and we have 
$\Tt\big((x', i'), (y, j) \big) \geq 0$ and $\Tt\big((x, i), (y,j))  = -\xi$.
For this to be true, one of the following must hold:
\begin{enumerate}
\item the state $i$ is in $Q_l$, and the state $j$ is in $Q_r$.
We know that $\Tt((x',i'), (y, j)) \geq 0$, and so we may be able to reach state $j$ from $i'$. 
We know $i' \in \delta(i, \Ll(x))$, by definition, and so we must have $\Tt\big((x, i), (x', i') \big)\geq 0$, and as $i$ is in $Q_l$, it must follow that state $i'$ is also either in $q_l$ or the middle element of some triplet such that $\Bb_m(x') \leq 0$ for some $0 \leq m \leq k$.
From the previous conditions, it follows, that $\Tt\big((x', i'), (y, j) \big) = -\xi$.
\item The state $i$ is the middle element of some triplet, such that $\Bb_m(x) \leq 0$ and either $j$ is in $Q_r$ or  $\Bb_i(y) > 0$.
If $j$ is in $Q_r$, we observe that $\Tt(x', i', y, j) \geq 0$, and so we must have $\Bb_m(x') > 0$ which cannot hold or $i'$ is in $Q_r$.
However, $i' \in \delta(i, \Ll(x))$, and so similar to the previous case, this cannot be true.
\item Finally, the state $j$ is the middle element of some triplet.
If so, we must have $i \in Q_l$, and $\Bb_i(y) > 0$.
We observe that the we must have $\Tt(x, i, x', i') \geq 0$ as the first condition holds, and so $i' \in Q_l$.
If $\Tt(x',i', y, j) \geq 0$, then it must be the case that $\Bb_m(y) \leq 0$ by construction as $i' \in Q_l$.
This is a contradiction.
\end{enumerate}
\end{proof}

\section{Closure Certificates}
\label{ap:cl_cc}
We now state the values for the coefficients in our two case studies.
\subsection{Closure Certificate for the two-dimensional Kuramoto Oscillator and NBA in Figure~\ref{fig:aut_case_study_1_safe}}
\label{ap:cc_2d_kur_safe}

The coefficients for the closure certificate $\Tt \big((x_1, x_2, i), (y_1, y_2, j) \big)$ are specified in the table below:
\begin{center}
\begin{tabular}{ |c | c |c| c| c| c |c| c| c| }
\hline
$i$ & $j$ & $c_{0,i,j}$ &  $c_{1,i,j}$ &  $c_{2,i,j}$ &
 $c_{3,i,j}$ &  $c_{4,i,j}$ &  $c_{5,i,j}$ &  $c_{6,i,j}$ \\
\hline
$q_0$ & $q_0$ & $10$ &  $10$ &  $10$ &
 $10$ &  $10$ &  $10$ &  $10$  \\

$q_0$ & $q_1$ & $-0.5875
$ &  $10$ &  $1$0 &
 $0$ &  $0$ &  $-8.2456$ &  $-10$ \\

$q_1$ & $q_0$ & $-10$ &  $10$ &  $10$ &
 $10$ &  $10$ &  $-10$ &  $-10$ \\

$q_1$ & $q_1$ & $-1.4164$ &  $10$ &  $10$ &
 $10$ &  $10$ &  $1.7536$ &  $-0.2975$ \\

 \hline
\end{tabular}
\end{center}

\subsection{Closure Certificate for the Two room temperature example and NBA in Figure~\ref{fig:aut_case_study_live}}
\label{ap:Case_study_two_room_cc}
The coefficients for the closure certificate $\Tt \big((x_1, x_2, i), (y_1, y_2, j) \big) $ are given in the tables below:
\begin{center}
\begin{tabular}{ |c | c |c| c| c| c |c| c| }
\hline
$i$ & $j$ & $c_{0,i,j}$ &  $c_{1,i,j}$ &  $c_{2,i,j}$ &
 $c_{3,i,j}$ &  $c_{4,i,j}$ &  $c_{5,i,j}$  \\
\hline
$q_0$ & $q_0$ & $10$ &  $10$ &  $10$ &
 $10$ &  $10$ &  $10$   \\
 $q_0$ & $q_1$ & $10$ &  $10$ &  $10$ &
 $10$ &  $10$ &  $10$   \\
 $q_0$ & $q_2$ & $10$ &  $10$ &  $10$ &
 $10$ &  $10$ &  $10$   \\
 $q_0$ & $q_3$ & $10$ &  $10$ &  $10$ &
 $10$ &  $10$ &  $10$   \\

$q_1$ & $q_0$ & $-10$ &  $-10$ &  $-10$ &
 $-10$ &  $-10$ &  $-0.497$   \\
 $q_1$ & $q_1$ & $-10$ &  $-10$ &  $-10$ &
 $-10$ &  $-10$ &  $-0.497$   \\
$q_1$ & $q_2$ & $-10$ &  $-10$ &  $-10$ &
 $-10$ &  $-10$ &  $-0.497$   \\
$q_1$ & $q_3$ & $-10$ &  $-10$ &  $-10$ &
 $-10$ &  $-10$ &  $-0.497$   \\

$q_2$ & $q_0$ & $-10$ &  $-10$ &  $-10$ &
 $-10$ &  $-10$ &  $-1.47$   \\
$q_2$ & $q_1$ & $-10$ &  $-10$ &  $-10$ &
 $-10$ &  $-10$ &  $-1.47$   \\
$q_2$ & $q_2$ & $-10$ &  $-10$ &  $-10$ &
 $-10$ &  $-10$ &  $-1.47$   \\
$q_2$ & $q_3$ & $-10$ &  $-10$ &  $-10$ &
 $-10$ &  $-10$ &  $-1.47$   \\

 $q_3$ & $q_0$ & $-10$ &  $-10$ &  $-10$ &
 $-10$ &  $-10$ &  $-0.387$   \\
 $q_3$ & $q_1$ & $-10$ &  $-10$ &  $-10$ &
 $-10$ &  $-10$ &  $-0.387$   \\
 $q_3$ & $q_2$ & $-10$ &  $-10$ &  $-10$ &
 $-10$ &  $-10$ &  $-0.387$   \\
 $q_3$ & $q_3$ & $1$ &  $1$ &  $1$ &
 $1$ &  $1$ &  $1$   \\

 \hline
\end{tabular}
\end{center}

\begin{center}
\begin{tabular}{ |c | c |c| c| c| c |c| }
\hline
$i$ & $j$ & $c_{6,i,j}$ &  $c_{7,i,j}$ &  $c_{8,i,j}$ &
 $c_{9,i,j}$ &  $c_{10,i,j}$  \\
\hline
$q_0$ & $q_0$ & $10$ &  $10$ &  $10$ &
 $10$ &  $10$  \\
 $q_0$ & $q_1$ & $10$ &  $10$ &  $10$ &
 $10$ &  $10$   \\
 $q_0$ & $q_2$ & $10$ &  $10$ &  $10$ &
 $10$ &  $10$    \\
 $q_0$ & $q_3$ & $10$ &  $10$ &  $10$ &
 $10$ &  $10$   \\

$q_1$ & $q_0$ & $-10$ &  $0.179$ &  $-10$ &
 $0.169$ &  $-10$   \\
$q_1$ & $q_1$ & $-10$ &  $0.179$ &  $-10$ &
 $0.169$ &  $-10$   \\
$q_1$ & $q_2$ & $-10$ &  $0.179$ &  $-10$ &
 $0.169$ &  $-10$   \\
$q_1$ & $q_3$ & $-10$ &  $0.179$ &  $-10$ &
 $0.169$ &  $-10$   \\

$q_2$ & $q_0$ & $-10$ &  $0.194$ &  $-10$ &
 $0.191$ &  $-10$  \\
$q_2$ & $q_1$ & $-10$ &  $0.194$ &  $-10$ &
 $0.191$ &  $-10$   \\
$q_2$ & $q_2$ & $-10$ &  $0.194$ &  $-10$ &
 $0.191$ &  $-10$   \\
 $q_2$ & $q_3$ & $-10$ &  $0.194$ &  $-10$ &
 $0.191$ &  $-10$  \\
 
 $q_3$ & $q_0$ & $-10$ &  $0.177$ &  $-10$ &
 $0.163$ &  $-10$  \\
 $q_3$ & $q_1$ & $-10$ &  $0.177$ &  $-10$ &
 $0.163$ &  $-10$  \\
 $q_3$ & $q_2$ & $-10$ &  $0.177$ &  $-10$ &
 $0.163$ &  $-10$  \\
 $q_3$ & $q_3$ & $1$ &  $1$ &  $1$ &
 $1$ &  $1$   \\
 \hline
\end{tabular}
\end{center}

\newpage

\end{document}